\newtheorem{thm}{Theorem}
\newtheorem{defn}{Definition}
\newtheorem{lem}{Lemma}
\newtheorem{remark}{Remark}
\newtheorem{algorithm}{Algorithm}
\newcommand{\argmax}{\operatornamewithlimits{argmax}}
\newcommand{\arglmax}{\operatornamewithlimits{arglmax}}
\newcounter{mytempeqncnt}
\begin{document}

\title{On Capacity and Optimal Scheduling for the Half-Duplex Multiple-Relay Channel}

\author{Lawrence Ong, Mehul Motani, Sarah J. Johnson\\
\thanks{The material in this paper was presented in part at
the 46th Annual Allerton Conference on Communication, Control, and Computing, Monticello, IL, September 2008,
the International Symposium on Information Theory and its Applications, Auckland, New Zealand, December 2008, and
the IEEE International Symposium on Information Theory, Seoul, South Korea, June 2009.}
\thanks{Lawrence Ong and Sarah J. Johnson are with School of Electrical Engineering and Computer Science, The University of Newcastle, Callaghan, NSW 2308, Australia (e-mail: lawrence.ong@cantab.net; sarah.johnson@newcastle.edu.au).}
\thanks{Mehul Motani is with National University of Singapore, Electrical \& Computer Engineering, 10 Kent Ridge Crescent, Singapore 119260 (e-mail: motani@nus.edu.sg).}
}

\maketitle

\begin{abstract}
We study the half-duplex multiple-relay channel (HD-MRC) where every node can either transmit or listen but cannot do both at the same time.
We obtain a capacity upper bound based on a max-flow min-cut argument and achievable transmission rates based on the decode-forward (DF) coding strategy, for both the discrete memoryless HD-MRC and the phase-fading HD-MRC. We discover that both the upper bound and the achievable rates are functions of the transmit/listen state (a description of which nodes transmit and which receive). More precisely, they are functions of the time fraction of the different states, which we term a schedule.
We formulate the optimal scheduling problem to find an optimal schedule that maximizes the DF rate. The optimal scheduling problem turns out to be a maximin optimization, for which we propose an algorithmic solution. We demonstrate our approach on a four-node multiple-relay channel, obtaining closed-form solutions in certain scenarios. Furthermore, we show that for the received signal-to-noise ratio degraded phase-fading HD-MRC, the optimal scheduling problem can be simplified to a max optimization.
\end{abstract}

\begin{IEEEkeywords}
Decode-forward (DF), half duplex, multiple-relay channel (MRC), scheduling, phase fading.
\end{IEEEkeywords}

\section{Introduction}

In this paper, we investigate the $D$-node  half-duplex multiple-relay channel (HD-MRC). The $D$-node multiple-relay channel (MRC) models a network where a source transmits data to a destination with the help of $(D-2)$ relays, which themselves have no data to send. In half-duplex channels, a node can either listen or transmit but cannot do both at the same time. Analyses of half-duplex channels are appealing from a practical point of view as they model most transceivers available in the market today.

Though much has been done to understand the full-duplex multiple-relay channel (FD-MRC)~\cite{xiekumar03,kramergastpar04,ongmotaniit08a} in which all nodes can transmit and listen at the same time, its half-duplex counterpart receives less attention for it is often thought of as a special case of the full-duplex channel, where a node simply stops listening when it transmits and vice versa. However, we will show in this paper that half-duplex channels identify issues, e.g., scheduling, that do not appear in the full-duplex counterpart. We will also show that the coordination of the transmitting and the listening modes among the relays plays an important role in determining the bottleneck of the half-duplex channel.

For the half-duplex networks, the \emph{transmit/listen state} is defined to describe which nodes are transmitting (or listening) at any time (by default, nodes not transmitting are considered to be listening). In the $D$-node HD-MRC, where the source always transmits and the destination always listens, the total number of transmit/listen states is $2^{D-2}$, capturing which of the $(D-2)$ relays are transmitting.

In this paper, we first obtain an upper bound to the capacity of the HD-MRC by specializing the cut-set bound for the half-duplex multiterminal networks~\cite{khojastepoursabharwal03}. We then use the idea of the decode-forward (DF) coding strategy for the FD-MRC~\cite{xiekumar03} to derive a lower bound to the capacity (i.e., achievable rate) of the HD-MRC.
We will see that the capacity upper bound and the achievable DF rates depend on the time fractions of the transmit/listen states, but do not depend on the sequence or the order of the states. This means we do not need to coordinate the transmit-listen sequence among the nodes to maximize the achievable DF rate or to find the capacity upper bound. We can view the combination of time fractions of different transmit/listen states as a \emph{schedule}. We then formulate the optimal scheduling problem to find an optimal schedule that maximizes the DF rate for the HD-MRC.

The optimal DF scheduling problem turns out to be a maximin problem, which is not easily solved in general.
We propose a greedy algorithm to solve the general maximin problem. We then show how the algorithm can be simplified when applied to the optimal DF scheduling problem. Using the simplified algorithm, we obtain closed-form solutions for certain channel topologies. Furthermore, we show that for the received signal-to-noise ratio (rSNR) degraded phase-fading HD-MRC, the optimal scheduling problem can be further simplified to a max optimization.
We compute DF rates of HD-MRCs and compare them to that of the corresponding FD-MRCs. We highlight some differences between the half-duplex and the full-duplex channels.

The specific contributions of this paper are as follows:
\begin{enumerate}
\item We obtain an upper bound to the capacity of the discrete memoryless HD-MRC and the phase-fading HD-MRC, by specializing the cut-set bound for the general half-duplex multiterminal network~\cite{khojastepoursabharwal03}.
\item We derive achievable rates for the discrete memoryless HD-MRC and the phase-fading HD-MRC. The coding scheme is constructed based on decode-forward (a coding strategy for the FD-MRC~\cite{xiekumar03}) and the idea of cycling through all transmit/listen states in each codeword (an extension of that used for the half-duplex single-relay channel~\cite{hostmadsen02}). Note that this result on achievable rates was independently derived by us, in an earlier
conference version of this work~\cite{ongwangmotani08allerton,ongwangmotani08isita}, and by Rost and Fettweis~\cite{rostfettweis08a,rostfettweis08b}.
\item We formulate the optimal scheduling problem for the phase-fading HD-MRC to find the optimal schedule that gives the highest DF rate, and propose an algorithm to solve the scheduling problem.
\item We show that for the rSNR-degraded phase-fading HD-MRC, the optimal scheduling problem can be simplified to a max optimization.
\item We compare the DF rates of the HD-MRC to those of the FD-MRC. The bottleneck nodes in the FD-MRC depend on the channel topology. On the other hand, the bottleneck of the phase-fading HD-MRC is always constrained by the first $B$ relays, where $1 \leq B \leq D-2$, or by all the relays and the destination. For the case of rSNR-degraded phase-fading HD-MRC, the bottleneck is always constrained by all the relays and the destination, which is independent of the topology.
\end{enumerate}

\subsection{Related Work}

The Gaussian half-duplex single-relay channel (HD-SRC), i.e., the MRC with $D=3$, was first studied by H{\o}st-Madsen~\cite{hostmadsen02} who derived a capacity upper bound based on cut-set arguments and a capacity lower bound based on the partial-decode-forward relaying strategy where the relay decodes part of the source message. A year later Khojastepour et al.~\cite{khojastepoursabharwalaazhang03,khojastepoursabharwalaazhang03b} showed that partial-decode-forward achieves the capacity of the degraded (Gaussian and discrete memoryless) HD-SRC, a result that mirrors the full-duplex counterpart where decode-forward achieves the capacity of the degraded single-relay channel~\cite{covergamal79}. In the aforementioned work, the transmit/listen state for each channel use is fixed and known a priori to all nodes. In 2004, Kramer~\cite{kramer04} showed that larger DF rates could potentially be achieved using random states because the choice of state could be used by the source and the relay to transmit information.

Other strategies have been proposed for the HD-SRC, e.g. (i) compress-forward where the relay quantizes its received signals and forwards them to the destination~\cite{hostmadsenzhang05}, and (ii) one where the relay decodes every second source message~\cite{kramer07ima} (i.e., the relay toggles between the transmit and listen modes for alternate blocks of message transmission of the source). The fading HD-SRC has been studied by H{\o}st-Madsen and Zhang~\cite{hostmadsenzhang05} and Rankov and Wittneben~\cite{rankovwittneben07}. A hybrid scheme using full-duplex and half-duplex relaying was proposed for the single-relay channel by Yamamoto et al.~\cite{yamamotohaneda10}.

For the half-duplex \emph{multiple}-relay channel (HD-MRC) (where $D \geq 3$), Khojastepour et al.\ considered the \emph{cascaded} MRC where the source, the relays, and the destination form a ``chain'', and each node only receives the transmissions from one node (i.e., the \emph{upstream} node in the chain). The capacity of the cascaded HD-MRC was derived for the Gaussian case~\cite{khojastepoursabharwalaazhang03b} as well as the discrete memoryless case~\cite{khojastepoursabharwalaazhang03}. Another class of the HD-MRC where the relays only receive signals from the source has been studied~\cite{gastparvetterli05,delcosoibars07,xuesandhu07}.

The general HD-MRC  where each relay (in its listening mode) can receive the transmissions of all other nodes was first studied by Kramer et al.~\cite{kramergastpar04}, where they mentioned that DF for the FD-MRC can be extended to the HD-MRC. However, explicit expressions for achievable rates were not derived.
In 2008, in the conference version of this paper~\cite{ongwangmotani08allerton,ongwangmotani08isita}, we studied this class of general HD-MRCs, and derived an upper bound to the capacity based on cut-set arguments and a lower bound to the capacity based on DF. In the same papers we introduced scheduling for the HD-MRC and algorithms to optimize the schedule to achieve the maximum DF rate. Independently in the same year, Rost and Fettweis derived capacity lower bounds for (i) the discrete memoryless HD-MRC\cite{rostfettweis08a} using partial-decode-forward and compress-forward strategies, (ii) the Gaussian HD-MRC~\cite{rostfettweis08b} using partial-decode-forward and amplify-forward strategies. Their results on achievable rates include ours as a special case.\footnote{In contrast to this paper, optimization of states, or scheduling, was not considered by Rost and Fettweis \cite{rostfettweis08a,rostfettweis08b}.} The scaling behavior of the HD-MRC, i.e., when the number of relays grow to infinity, was studied by Gastpar and Vetterli~\cite{gastparvetterli05}.

In half-duplex networks, the question of scheduling arises, i.e., how the relays coordinate their transmit/listen modes. Scheduling for the HD-MRC with two relays (i.e., $D=4$) and where there is no direct source-destination link has been recently studied by Rezaei et al.~\cite{rezaeigharan10}. It was shown that when the signal-to-noise ratio (SNR) is low and under certain conditions, the optimal schedule spans over two states: (i) both relays listen, and (ii) both relays transmit. For the high SNR scenario, the optimal schedule is such that at any time only one relay listens and the other transmits. In this paper, we study scheduling for the Gaussian HD-MRC for any number of relays assuming that DF coding strategy is used. We propose an algorithm to find the optimal schedule for DF.

In the literature, scheduling for the relay network is also studied under other context, e.g., (i) resource allocation: allocating time fractions for relays to forward data to different destinations~\cite{hammerstromkuhn04,weogotlin05,shizhang08}, and (ii) interference avoidance:  avoiding simultaneous transmissions among nearby relays (graph coloring)~\cite{leechen07,guomaguo08}.

\section{Channel Models}\label{sec:hd-mrc}

\subsection{Discrete Memoryless HD-MRC}
The FD-MRC was introduced by Gupta and Kumar~\cite{guptakumar03}, and Xie and Kumar~\cite{xiekumar03}. In a $D$-node MRC with nodes $\{1, 2, \dotsc, D\} \triangleq \mathcal{D}$, node $1$ is the source, node $D$ the destination, and nodes $2$ to $(D-1)$ the relays. A message $W$ is generated at node $1$ and is to be sent to node $D$.  In this paper, we consider the HD-MRC where a node can only transmit ($T$) or listen ($L$) at any point in time. We assume that the source is always transmitting, and the destination is always listening.

We use a transmit/listen state vector $\boldsymbol{s}$ to capture the half-duplex scenarios among the relays. For the $D$-node HD-MRC where the source always transmits and the destination always listens, the state can be expressed as $\boldsymbol{s}=(s^{(1)}, \dotsc, s^{(i)}, \dotsc, s^{(D-2)}) \in \{L,T\}^{D-2}$, where the $i$-th element of $\boldsymbol{s}$ is $T$ if node $(i+1)$ transmits, and is $L$ otherwise, i.e., if node $(i+1)$ listens.
We will use the terms transmit/listen state and state interchangeably in the rest of this paper.
In this paper, we only consider the case where the state is fixed and known a priori to all nodes.

We define the set of all relays as $\mathcal{R} = \{2, 3, \dotsc, D-1\}$. In state $\boldsymbol{s}$, we further define $\mathcal{L}(\boldsymbol{s})$ as the set of relays that are listening and $\mathcal{T}(\boldsymbol{s})$ as the set of relays that are transmitting, i.e., $\mathcal{L}(\boldsymbol{s}) \triangleq \{ i \in \mathcal{R} : s^{(i-1)} = L\}$ and $\mathcal{T}(\boldsymbol{s}) \triangleq \{ i \in \mathcal{R} : s^{(i-1)} = T\}$.  Note that $\mathcal{L}(\boldsymbol{s}) \cap \mathcal{T}(\boldsymbol{s}) = \emptyset$ and $\mathcal{L}(\boldsymbol{s}) \cup \mathcal{T}(\boldsymbol{s}) = \mathcal{R}$.

Throughout this paper, for each random variable denoted by an upper letter $X$, its realization is denoted by the corresponding lower letter $x$. We use the calligraphic font to denote sets, and bold letters to denote vectors. For example, $\boldsymbol{X} = (X[1], X[2], \dotsc, X[n])$, and for $\mathcal{B} = \{b_1, b_2, \dotsc, b_{|\mathcal{B}|}\}$, $X_{\mathcal{B}}=\{X_{b_1}, X_{b_2}, \dotsc, X_{b_{|\mathcal{B}|}}\}$.

We denote the input from user $i$ to the channel by $X_i \in \mathcal{X}_i$, and the channel output received by user $i$ by $Y_i \in \mathcal{Y}_i$. When a node $i$ is listening, i.e., not transmitting, we assign a dummy symbol as its input to the channel, $\tilde{x}_i \in \mathcal{X}_i$, where $\tilde{x}_i$ is fixed and is known a priori to all nodes.  Similarly, when node $i$ is transmitting, we assign $\tilde{y}_i \in \mathcal{Y}_i$ to be its (dummy) received channel output, where $\tilde{y}_i$ is also fixed and known a priori to all nodes.  For example, in the Gaussian half-duplex channel where channel inputs and outputs are taken over the set of real numbers, we can set $\tilde{x}_i = 0$ if node $i$ is listening or $\tilde{y}_i =0$ if node $i$ is transmitting.

The $D$-node HD-MRC in state $\boldsymbol{s}=(L, s^{(2)}, \dotsc, s^{(D-3)},T)$ is depicted in Fig.~\ref{fig:hd-mrc}. In this state, node 2 is listening and node $(D-1)$ is transmitting.

\begin{figure}
\centering
\resizebox{0.95\linewidth}{!}{ 
\begin{picture}(0,0)%
\includegraphics{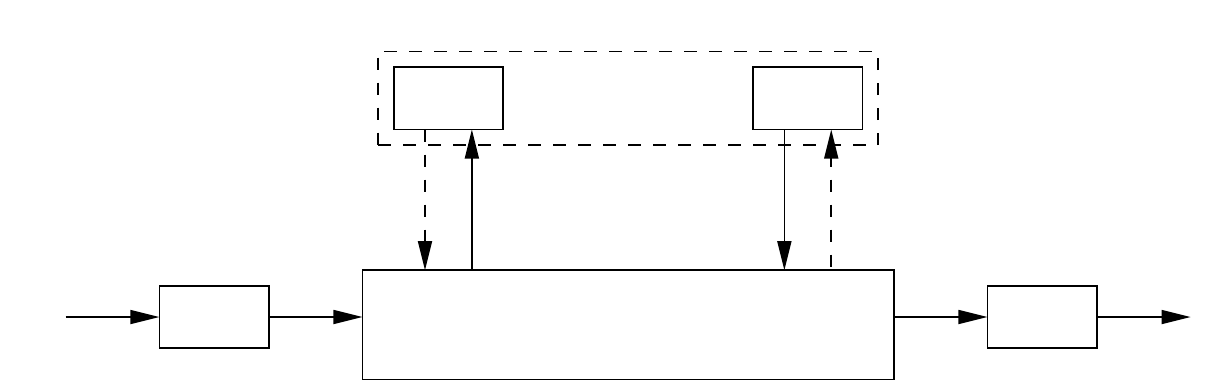}%
\end{picture}%
\setlength{\unitlength}{3947sp}%
\begingroup\makeatletter\ifx\SetFigFont\undefined%
\gdef\SetFigFont#1#2#3#4#5{%
  \fontsize{#1}{#2pt}%
  \fontfamily{#3}\fontseries{#4}\fontshape{#5}%
  \selectfont}%
\fi\endgroup%
\begin{picture}(5805,1821)(-389,-1198)
\put(601,-961){\makebox(0,0)[lb]{\smash{{\SetFigFont{12}{14.4}{\familydefault}{\mddefault}{\updefault}{\color[rgb]{0,0,0}1}%
}}}}
\put(4501,-961){\makebox(0,0)[lb]{\smash{{\SetFigFont{12}{14.4}{\familydefault}{\mddefault}{\updefault}{\color[rgb]{0,0,0}$D$}%
}}}}
\put(1726, 89){\makebox(0,0)[lb]{\smash{{\SetFigFont{12}{14.4}{\familydefault}{\mddefault}{\updefault}{\color[rgb]{0,0,0}2}%
}}}}
\put(3226, 89){\makebox(0,0)[lb]{\smash{{\SetFigFont{12}{14.4}{\familydefault}{\mddefault}{\updefault}{\color[rgb]{0,0,0}$D-1$}%
}}}}
\put(-374,-961){\makebox(0,0)[lb]{\smash{{\SetFigFont{12}{14.4}{\familydefault}{\mddefault}{\updefault}{\color[rgb]{0,0,0}$W$}%
}}}}
\put(5401,-961){\makebox(0,0)[lb]{\smash{{\SetFigFont{12}{14.4}{\familydefault}{\mddefault}{\updefault}{\color[rgb]{0,0,0}$\hat{W}$}%
}}}}
\put(976,-811){\makebox(0,0)[lb]{\smash{{\SetFigFont{12}{14.4}{\familydefault}{\mddefault}{\updefault}{\color[rgb]{0,0,0}$X_1$}%
}}}}
\put(3976,-811){\makebox(0,0)[lb]{\smash{{\SetFigFont{12}{14.4}{\familydefault}{\mddefault}{\updefault}{\color[rgb]{0,0,0}$Y_D$}%
}}}}
\put(2851,-436){\makebox(0,0)[lb]{\smash{{\SetFigFont{12}{14.4}{\familydefault}{\mddefault}{\updefault}{\color[rgb]{0,0,0}$X_{D-1}$}%
}}}}
\put(3676,-436){\makebox(0,0)[lb]{\smash{{\SetFigFont{12}{14.4}{\familydefault}{\mddefault}{\updefault}{\color[rgb]{0,0,0}$Y_{D-1}=\tilde{y}_{D-1}$}%
}}}}
\put(1951,-436){\makebox(0,0)[lb]{\smash{{\SetFigFont{12}{14.4}{\familydefault}{\mddefault}{\updefault}{\color[rgb]{0,0,0}$Y_2$}%
}}}}
\put(901,-436){\makebox(0,0)[lb]{\smash{{\SetFigFont{12}{14.4}{\familydefault}{\mddefault}{\updefault}{\color[rgb]{0,0,0}$X_2=\tilde{x}_2$}%
}}}}
\put(2401, 89){\makebox(0,0)[lb]{\smash{{\SetFigFont{12}{14.4}{\familydefault}{\mddefault}{\updefault}{\color[rgb]{0,0,0}$\dotsc$}%
}}}}
\put(2326,-961){\makebox(0,0)[lb]{\smash{{\SetFigFont{12}{14.4}{\familydefault}{\mddefault}{\updefault}{\color[rgb]{0,0,0}$p(\cdot|\cdot)$}%
}}}}
\put(1951,464){\makebox(0,0)[lb]{\smash{{\SetFigFont{12}{14.4}{\familydefault}{\mddefault}{\updefault}{\color[rgb]{0,0,0}half duplex relays}%
}}}}
\end{picture}%
}
\caption{The HD-MRC in state $\boldsymbol{s}=(L,\dotsc,T)$ where node 2 listens and node $(D-1)$ transmits}
\label{fig:hd-mrc}
\end{figure}

Since the HD-MRC is an instance of the FD-MRC in a certain transmit/listen state, the channel distribution of the HD-MRC follows the underlying FD-MRC. Let the channel distribution for a $D$-node FD-MRC be
\begin{equation}
p^*(y_2, y_3, \dotsc, y_D| x_1, x_2, \dotsc, x_{D-1})\nonumber
\end{equation}
on $\mathcal{Y}_2 \times \mathcal{Y}_3 \times \dotsm \times \mathcal{Y}_D$, for each $(x_1, x_2, \dotsc, x_{D-1}) \in \mathcal{X}_1 \times \mathcal{X}_2 \times \dotsm \times \mathcal{X}_{D-1}$.
The channel distribution for the corresponding $D$-node HD-MRC in state $\boldsymbol{s}$ is given by
\newpage
\begin{multline}
p(y_2, y_3, \dotsc, y_D| x_1, x_2, \dotsc, x_{D-1}, \boldsymbol{s}) \\ = p \big(y_{\mathcal{L}(\boldsymbol{s})}, y_D | x_1, x_2, \dotsc, x_{D-1}, y_{\mathcal{T}(\boldsymbol{s})} \big) \prod_{i \in \mathcal{T}(\boldsymbol{s})} \delta(y_i = \tilde{y}_i),\label{eq:model-hd-mrc}
\end{multline}
where
\begin{multline}
p \big(y_{\mathcal{L}(\boldsymbol{s})}, y_D | x_1, x_2, \dotsc, x_{D-1}, y_{\mathcal{T}(\boldsymbol{s})} \big)\\ = \frac{p^*(y_2, y_3, \dotsc, y_D| x_1, x_2, \dotsc, x_{D-1})}{\sum_{y_{\mathcal{L}(\boldsymbol{s})}, y_D}p^*(y_2, y_3, \dotsc, y_D| x_1, x_2, \dotsc, x_{D-1})}.
\end{multline}
$\delta(A)$ is the indicator function that is 1 if the event $A$ is true, and is 0 otherwise. The expression in \eqref{eq:model-hd-mrc} forces that with probability equals one, node $i$'s received signal will be $\tilde{y}_i$ if it is transmitting.  In this paper, we only consider memoryless and time-invariant channels.

\subsection{Block Codes and Achievable Rates}

In the MRC, the information source at node 1 emits a random messages $W \in \{ 1, ..., 2^{nR} \} \triangleq \mathcal{W}$, which is to be sent to the destination, node $D$. We consider block coding with $n$ uses of the channel. An $( 2^{nR}, n )$ code for the $D$-node HD-MRC comprises the following:
\begin{enumerate}
\item (source) An encoding function at node 1, $f_{1} : \mathcal{W} \rightarrow \mathcal{X}_1^n$, such that $\boldsymbol{X}_1=f_1(W)$, which maps a source message to a codeword of length $n$. Here $\boldsymbol{X}_1 = (X_1[1], X_1[2], \dotsc, X_1[n])$ where $X_1[t]$ is node 1's input to the channel at time $t$.
\item (relays) $n$ encoding functions at each node $i \in \{2,\dotsc,D-1\}$, $f_{i,t}: \mathcal{Y}_i^{t-1} \rightarrow \mathcal{X}_i$, for all $t \in \{1,\dotsc,n\}$, such that $X_i[t] = f_{i,t}(Y_i[1], Y_i[2], \dotsc, Y_i[t-1])$, which map past received signals to the next transmit signal.
\item (relays) The restriction that at the $t$-th channel use in state $\boldsymbol{s}$, the transmit messages of the relays that are listening are restricted to $X_i[t] = \tilde{x}_i$, $\forall i \in \mathcal{L}(\boldsymbol{s})$.
\item (destination) A decoding function at node $D$, $g_{D}: \mathcal{Y}_D^{n} \rightarrow \mathcal{W}$, such that $\hat{W}=g_{D}(\boldsymbol{Y}_D)$,
which maps $n$ received signals to a source message estimate.
\end{enumerate}

On the assumption that the source message $W$ is uniformly distributed over $\mathcal{W}$, the average error probability is defined
as $P_e = \Pr \{\hat{W} \neq  W \}$. The rate $R$ is achievable if, for any $\epsilon > 0$, there is at least one $(2^{nR},n)$ code such that $P_e < \epsilon$. The capacity $C$ is defined as the supremum of the achievable rates.

\subsection{Phase-Fading HD-MRC}\label{sec:hd-gaussian-mrc}

Now, we define the phase-fading HD-MRC. We set $\tilde{x}_i=0$ for nodes that are listening, and $\tilde{y}_i=0$ for nodes that are transmitting. In state $\boldsymbol{s}$, the received signal at node $k \in \{2,\dotsc, D\}$ is given by
\begin{equation}
Y_k =
\begin{cases}
\displaystyle \sum_{i \in \{1\}\cup \mathcal{T}(\boldsymbol{s})} \sqrt{\lambda_{i,k}} e^{j\theta(i,k)} X_i + Z_k, \;\; \text{if }k \in \mathcal{L}(\boldsymbol{s}) \cup \{D\}\\
\tilde{y}_k =0, \quad\quad\quad\quad\quad\quad\quad\;\; \text{otherwise, i.e., if } k \in \mathcal{T}(\boldsymbol{s}),
\end{cases}
\end{equation}
where each $X_i$ is a complex random variable, $Z_k$ is the receiver noise at node $k$ and is an i.i.d. circularly symmetric complex Gaussian random variable with variance $E[Z_k Z_k^\dagger] = N_k$, where $Z_k^\dagger$ is the complex-conjugate of $Z_k$, $\lambda_{i,k}$ captures the path loss from node $i$ to node $k$ and is $\kappa d_{i,k}^{-\eta}$ for $d_{i,k} \geq 1$, and is $\kappa$ otherwise, $d_{i,k} \geq 0$ is the distance between nodes $i$ and $k$, $\eta \geq 2$ is the attenuation exponent (with $\eta = 2$ for free space transmission), $\kappa$ is a positive constant, $e^{j\theta(i,k)}$ is the phase-fading random variable, $\theta(i,k)$ is uniformly distributed over $[0,2\pi)$, and $\theta(i,k)$ are jointly independent of each other for all channel uses, for all $i$, and for all $k$.

We assume that all nodes know $\kappa$ and $d_{i,k}$ for all node pairs. We also assume that node $k$ only knows $\theta(i,k)$ for all $i$, and does not know any $\theta(i,l)$ for $l \neq k$. Hence, the transmitted signals of node $i$ cannot be chosen as a function of $\theta(i,k)$ for any $k$.

In this paper, we consider the following individual-node \emph{per-symbol} transmitted power constraint. Setting the half-duplex constraints $\tilde{x}_i = 0$ for node $i$ in the listening mode, we get
\begin{equation}
E[X_iX_i^\dagger] \leq
\begin{cases}
P_i, & \text{if } i \in \mathcal{T}(\boldsymbol{s}) \cup \{1\}\\
0, & \text{otherwise, i.e., if } i \in \mathcal{L}(\boldsymbol{s}).
\end{cases}
\end{equation}

\section{Upper Bound to the Capacity}\label{sec:hd-capacity-upper-bound}

\subsection{Capacity Upper Bound for the Discrete Memoryless HD-MRC}
An upper bound to the capacity of the HD-MRC is given in the following theorem.
\begin{thm}\label{thm:hd-mrc-upper-bound}

Consider the discrete memoryless $D$-node HD-MRC. The capacity $C$ is upper bounded by
\begin{align}
C \leq &\sup_{p(x_1, x_2, \dotsc, x_{D-1},\boldsymbol{s})} \min_{\mathcal{Q} \subseteq \mathcal{R}} \nonumber  \\
& \sum_{\boldsymbol{s} \in \{L,T\}^{D-2}} p(\boldsymbol{s})
I\Big( X_1,X_{\mathcal{Q}\cap\mathcal{T}(\boldsymbol{s})}; Y_{\mathcal{Q}^c\cap\mathcal{L}(\boldsymbol{s})},Y_D \Big|
 \nonumber \\
 &  X_{\mathcal{Q}^c\cap\mathcal{T}(\boldsymbol{s})},  X_{\mathcal{L}(\boldsymbol{s})} = \tilde{x}_{\mathcal{L}(\boldsymbol{s})}, Y_{\mathcal{Q}^c\cap\mathcal{T}(\boldsymbol{s})}=\tilde{y}_{\mathcal{Q}^c\cap\mathcal{T}(\boldsymbol{s})}, \boldsymbol{S}=\boldsymbol{s} \Big), \label{eq:hd-hd-dm-cut-set-rate}
\end{align}
where $p(x_1, x_2, \dotsc, x_{D-1},\boldsymbol{s})=p(\boldsymbol{s})p(x_1,x_{\mathcal{T}(\boldsymbol{s})}|x_{\mathcal{L}(\boldsymbol{s})}, \boldsymbol{s})$ $\prod_{k\in\mathcal{L}(\boldsymbol{s})} \delta(x_k=\tilde{x}_k)$, $\mathcal{Q}^c = \mathcal{R} \setminus \mathcal{Q}$, $\mathcal{R}$ is the set of all relays, $\mathcal{T}(\boldsymbol{s})$ is the set of all relays that are transmitting in state $\boldsymbol{s}$, and $\mathcal{L}(\boldsymbol{s})$ is the set of all relays that are listening in state $\boldsymbol{s}$.
\end{thm}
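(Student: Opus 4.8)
The plan is to derive \eqref{eq:hd-hd-dm-cut-set-rate} as a specialization of the general half-duplex multiterminal cut-set bound of Khojastepour and Sabharwal~\cite{khojastepoursabharwal03} to the MRC topology. The relevant cuts separate a source side from a destination side: for each $\mathcal{Q}\subseteq\mathcal{R}$ I place $\{1\}\cup\mathcal{Q}$ on the source side and $\mathcal{Q}^c\cup\{D\}$ on the destination side, which is admissible because node $1$ always transmits and node $D$ always listens. Every such cut yields a valid outer bound, so the $\min_{\mathcal{Q}}$ in the statement is obtained by taking the tightest of the $2^{|\mathcal{R}|}$ choices, and the final $\sup$ over input distributions comes from the standard converse freedom in choosing the joint law.

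First I would invoke Fano's inequality to write $nR \le I(W;\boldsymbol{Y}_D)+n\epsilon_n$ with $\epsilon_n\to 0$, and enlarge the destination's observation to include the listening relays on the far side of the cut, $I(W;\boldsymbol{Y}_{\mathcal{Q}^c\cap\mathcal{L}},\boldsymbol{Y}_D)$, which only increases the mutual information. Using that each relay's channel input at time $t$ is a deterministic function of its own past received signals (the maps $f_{i,t}$) together with the memoryless, time-invariant channel law, the usual multiterminal cut-set manipulation converts this block quantity into a per-symbol sum $\sum_t I\big(X_1[t],X_{\mathcal{Q}\cap\mathcal{T}}[t];\,\cdot\,\big|\,\cdot\,\big)$ in which the complementary-side transmitters' inputs are conditioned upon.

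The half-duplex refinement enters through the transmit/listen state. Because the state is fixed and known a priori to every node, it can be treated as a deterministic time-sharing variable $\boldsymbol{S}$: partitioning the $n$ channel uses by their state and collecting the fraction of uses spent in each state as $p(\boldsymbol{s})$, the per-symbol terms aggregate into $\sum_{\boldsymbol{s}} p(\boldsymbol{s})\, I(\,\cdot\mid\boldsymbol{S}=\boldsymbol{s})$. Within each state I then substitute the structure from \eqref{eq:model-hd-mrc}: every listening relay $i\in\mathcal{L}(\boldsymbol{s})$ transmits the constant $\tilde{x}_i$, and every transmitting relay $i\in\mathcal{T}(\boldsymbol{s})$ receives the constant $\tilde{y}_i$. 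Since a deterministic constant carries no information and conditioning on it does not change an entropy, the source-side inputs of listening relays and the destination-side outputs of transmitting relays fall out of the active term, which collapses to exactly $I\big(X_1,X_{\mathcal{Q}\cap\mathcal{T}(\boldsymbol{s})};\,Y_{\mathcal{Q}^c\cap\mathcal{L}(\boldsymbol{s})},Y_D \,\big|\, X_{\mathcal{Q}^c\cap\mathcal{T}(\boldsymbol{s})},\,X_{\mathcal{L}(\boldsymbol{s})}=\tilde{x}_{\mathcal{L}(\boldsymbol{s})},\,Y_{\mathcal{Q}^c\cap\mathcal{T}(\boldsymbol{s})}=\tilde{y}_{\mathcal{Q}^c\cap\mathcal{T}(\boldsymbol{s})},\,\boldsymbol{S}=\boldsymbol{s}\big)$. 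Defining a single-letter joint law of the prescribed product form $p(\boldsymbol{s})\,p(x_1,x_{\mathcal{T}(\boldsymbol{s})}\mid x_{\mathcal{L}(\boldsymbol{s})},\boldsymbol{s})\prod_{k\in\mathcal{L}(\boldsymbol{s})}\delta(x_k=\tilde{x}_k)$ and taking the supremum over it finishes the argument.

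The main obstacle I anticipate is the single-letterization in the presence of the dummy symbols: I must verify that the per-symbol input distributions induced by an arbitrary code can be absorbed into one auxiliary distribution of the required product form without loss, and in particular that the deterministic substitutions $X_i=\tilde{x}_i$ and $Y_i=\tilde{y}_i$ stay consistent with the channel factorization of \eqref{eq:model-hd-mrc} after conditioning. A secondary point is justifying that the state can legitimately be extracted as a time-sharing variable rather than as a loose inequality; this relies on the state being fixed and common knowledge, so that it introduces no extra randomness correlated with $W$ and the averaging over $p(\boldsymbol{s})$ is exact.
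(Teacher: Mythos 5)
Your proposal is correct and follows essentially the same route as the paper: Appendix~A likewise invokes the half-duplex cut-set bound of~\cite{khojastepoursabharwal03} (Corollary~2) specialized to cuts separating $\{1\}\cup\mathcal{Q}$ from $\mathcal{Q}^c\cup\{D\}$, and then collapses the deterministic dummy symbols $X_{\mathcal{L}(\boldsymbol{s})}=\tilde{x}_{\mathcal{L}(\boldsymbol{s})}$ and $Y_{\mathcal{Q}^c\cap\mathcal{T}(\boldsymbol{s})}=\tilde{y}_{\mathcal{Q}^c\cap\mathcal{T}(\boldsymbol{s})}$ out of the mutual information term exactly as you describe. The only difference is that you additionally sketch the Fano-inequality and single-letterization machinery underlying the cited cut-set bound, which the paper simply takes as given from the reference.
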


The aforementioned upper bound is obtained from the cut-set bound for the half-duplex multiple-source multiple-destination network~\cite{khojastepoursabharwal03}, which is a generalization of the cut-set bound for the full-duplex multiple-source multiple-destination network~\cite[Theorem 15.10.1]{coverthomas06}. Consider all \emph{cuts} of the network separating $\{1\} \cup \mathcal{Q}$ from $\mathcal{Q}^c \cup \{D\}$, for all possible selections of relays $\mathcal{Q} \subseteq \mathcal{R}$. We see that the source is always on one side of the cut with the destination always on the other side. For simplicity, we call the set  $\{1\} \cup \mathcal{Q}$ the \emph{left side} of the cut and $\mathcal{Q}^c \cup \{D\}$ the \emph{right side} of the cut. An upper bound to the rate can be obtained by considering the maximum achievable rate across the cut, from the left to the right, assuming that all nodes on each side of the cut can fully cooperate. More specifically, consider any cut, the rate of message flow from the source to the destination (across the cut) is upper bounded by the mutual information between the inputs from all transmitting nodes on the left side of the cut and the outputs to all listening nodes on the right side, conditioned on the inputs and outputs of all transmitting nodes on the right side, and the inputs from all listening nodes (which is known a priori to all nodes). Hence, any achievable rate must be upper bounded by that for all the cuts (i.e., those with the source on the left side of the cut and the destination on the right side of the cut). The proof of Theorem~\ref{thm:hd-mrc-upper-bound} can be found in Appendix~\ref{appendix:ub}.

\subsection{Capacity Upper Bound for the Phase-Fading HD-MRC}
For the phase-fading HD-MRC, we set the inputs from transmitting nodes to the channel to be independent Gaussian. Note that channel inputs that take advantage of coherent combining are not feasible because the phase changes randomly for each channel use and is unknown to the transmitter. So, we get the following capacity upper bound.
\begin{thm}\label{thm:hd-gaussian-mrc-upper-bound}
Consider a $D$-node phase-fading HD-MRC. The capacity $C$ is upper bounded by
\begin{equation}
C \leq \max_{p(\boldsymbol{s})} \min_{\mathcal{Q} \subseteq \mathcal{R}} r_{(\mathcal{Q})} \triangleq R_\text{ub},
\end{equation}
where
\begin{align}
r_{(\mathcal{Q})} = &\sum_{\boldsymbol{s} \in \{L,T\}^{D-2}} \Bigg[ p(\boldsymbol{s})\nonumber  \\
& \quad \Gamma \Bigg(  \sum_{k \in (\mathcal{Q}^c\cap\mathcal{L}(\boldsymbol{s})) \cup \{D\}} \frac{\sum_{i \in \{1\} \cup (\mathcal{Q}\cap\mathcal{T}(\boldsymbol{s}))} \lambda_{i,k}P_i}{N_k}  \Bigg) \Bigg],
\end{align}
where $\Gamma(x) = \log_2(1+x)$.
\end{thm}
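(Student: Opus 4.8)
The plan is to specialize the discrete-memoryless bound of Theorem~\ref{thm:hd-mrc-upper-bound} to the phase-fading channel, showing that for every cut $\mathcal{Q}$ and every state $\boldsymbol{s}$ the conditional mutual-information term inside the sum is bounded above by the corresponding sum of point-to-point phase-fading capacities $\sum_k \Gamma(\cdot)$, uniformly over all admissible input distributions. Because this per-state bound depends only on the powers $P_i$ and not on the actual input distribution $p(x_1,x_{\mathcal{T}(\boldsymbol{s})}\mid\cdots)$, the outer supremum over input distributions in Theorem~\ref{thm:hd-mrc-upper-bound} collapses to a maximum over the schedule $p(\boldsymbol{s})$ alone, which is exactly the form claimed for $R_\text{ub}$.

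First I would fix a cut $\mathcal{Q}\subseteq\mathcal{R}$ and a state $\boldsymbol{s}$ and isolate the single mutual-information term. The receiving nodes on the right side are $k\in(\mathcal{Q}^c\cap\mathcal{L}(\boldsymbol{s}))\cup\{D\}$, and since each such $k$ knows its own phases $\theta(i,k)$ and is given the right-side transmit signals $X_{\mathcal{Q}^c\cap\mathcal{T}(\boldsymbol{s})}$, it can subtract their contribution, leaving the effective received signal $\sum_{i\in\{1\}\cup(\mathcal{Q}\cap\mathcal{T}(\boldsymbol{s}))}\sqrt{\lambda_{i,k}}\,e^{j\theta(i,k)}X_i + Z_k$. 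The key structural observation is that, conditioned on all the channel inputs, the outputs $\{Y_k\}$ at distinct receivers are mutually independent, because the fading phases $\{\theta(i,k)\}$ and the noises $\{Z_k\}$ are independent across $k$. This conditional independence makes $Y_{k_1}-X_{\text{inputs}}-Y_{k_2}$ a Markov chain, so the joint term splits as $I\big(X_{\text{left}};\{Y_k\}\mid\cdots\big)\le\sum_k I\big(X_{\text{left}};Y_k\mid\cdots\big)$, reducing the problem to bounding each single-output term separately.

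Next I would bound each single-receiver term $I(X_{\text{left}};Y_k\mid\cdots)$. Conditioning the receiver additionally on its phases $\{\theta(i,k)\}_i$ only increases the mutual information, since the transmit signals cannot depend on the $\theta(i,k)$ and are therefore independent of them; so it suffices to bound $I(X_{\text{left}};Y_k\mid\{\theta(i,k)\}_i,\cdots)$. For each fixed phase realization the channel is a complex Gaussian multiple-input single-output channel, and the standard maximum-entropy argument gives $I\le\Gamma\big(V(\vartheta)/N_k\big)$, where $V(\vartheta)$ is the conditional received-signal power for that realization. Averaging over the phases and invoking Jensen's inequality with the concavity of $\Gamma$ yields $I\le\Gamma\big(E_\theta[V(\vartheta)]/N_k\big)$. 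The crucial computation is that the phase-averaged received power carries no coherent-combining gain: the cross terms $E[\sqrt{\lambda_{i,k}\lambda_{j,k}}\,e^{j(\theta(i,k)-\theta(j,k))}X_iX_j^\dagger]$ with $i\ne j$ factor through $E[e^{j\theta(i,k)}]=0$ and hence vanish, so $E_\theta[V(\vartheta)]=\sum_{i\in\{1\}\cup(\mathcal{Q}\cap\mathcal{T}(\boldsymbol{s}))}\lambda_{i,k}\,E[|X_i|^2]\le\sum_i\lambda_{i,k}P_i$, independent of any correlation among the inputs. This gives the per-receiver bound $\Gamma\big(\sum_i\lambda_{i,k}P_i/N_k\big)$.

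Assembling the pieces---summing the per-receiver bounds over $k\in(\mathcal{Q}^c\cap\mathcal{L}(\boldsymbol{s}))\cup\{D\}$, weighting by $p(\boldsymbol{s})$ and summing over $\boldsymbol{s}$, then taking the minimum over $\mathcal{Q}$ and the maximum over $p(\boldsymbol{s})$---reproduces $r_{(\mathcal{Q})}$ and hence $R_\text{ub}$. The main obstacle I anticipate is the phase-averaging step: one must carefully justify that conditioning on the receiver's phase side information is legitimate (it only loosens the upper bound), that the transmitter's ignorance of the phases makes inputs and phases independent so the cross terms genuinely vanish, and that interchanging the Gaussian entropy bound with the expectation over phases via Jensen's inequality is valid. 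Everything else is a routine specialization of Theorem~\ref{thm:hd-mrc-upper-bound}.
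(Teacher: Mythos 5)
The paper's own proof of this theorem is a one-liner: substitute independent Gaussian inputs into Theorem~\ref{thm:hd-mrc-upper-bound}, with Remark~\ref{remark:gaussian-optimal} pointing to \cite{hostmadsenzhang05,kramergastpar04} for why such inputs suffice in phase-fading channels. Your proposal instead evaluates the cut bound from scratch, which is a reasonable plan in principle, but the expression you end up with is not the one in the statement.

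The concrete gap is in the assembly step. Your per-receiver decomposition is sound as an inequality---conditioned on all inputs, the outputs at distinct listeners are independent, so $I(X_{\mathrm{left}};\{Y_k\}_k\mid\cdot)\le\sum_k I(X_{\mathrm{left}};Y_k\mid\cdot)$, and each term is indeed at most $\Gamma(a_k)$ with $a_k=\sum_{i}\lambda_{i,k}P_i/N_k$. But summing these gives $\sum_k\Gamma(a_k)$, whereas $r_{(\mathcal{Q})}$ contains $\Gamma\bigl(\sum_k a_k\bigr)$, and
\begin{equation}
\sum_k \Gamma(a_k)=\log_2\prod_k(1+a_k)\;\ge\;\log_2\Bigl(1+\sum_k a_k\Bigr)=\Gamma\Bigl(\sum_k a_k\Bigr),
\end{equation}
with strict inequality whenever two or more of the $a_k$ are positive. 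So your final step does not ``reproduce $r_{(\mathcal{Q})}$''; it proves only the strictly weaker upper bound in which the sum over listeners sits outside $\Gamma$, and that does not imply the theorem. The receive-side sum can only be pulled inside the logarithm by treating all listeners on the right of the cut as a single vector receiver---the combining gain across receive ``antennas'' is exactly what your decoupling discards---and evaluating $I(X_{\mathrm{left}};Y_{\mathrm{right}}\mid\cdot)$ jointly for independent Gaussian inputs, which is what the paper's one-line substitution into Theorem~\ref{thm:hd-mrc-upper-bound} amounts to. (Even then some care is needed: for cuts with several transmitters on the left and several listeners on the right, the joint Gaussian evaluation is an expected $\log\det$ rather than the scalar form $\Gamma(\sum_k a_k)$, so this step is not as routine as either your write-up or the paper's one-sentence proof suggests.) The phase-averaging ingredients you flag---independence of inputs and phases, vanishing cross terms, Jensen---are fine; the failure is purely in where the sum over $k$ lands relative to $\Gamma$.
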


\begin{IEEEproof}[Proof for Theorem~\ref{thm:hd-gaussian-mrc-upper-bound}]
Theorem~\ref{thm:hd-gaussian-mrc-upper-bound} follows directly from  Theorem~\ref{thm:hd-mrc-upper-bound} by using independent Gaussian inputs for all nodes.
\end{IEEEproof}

\begin{remark}\label{remark:gaussian-optimal}
See \cite[Lemma 1]{hostmadsenzhang05}, \cite[Theorems 6 \& 7]{kramergastpar04} for the optimality of independent Gaussian inputs in phase-fading channels.
\end{remark}

\section{Lower Bound to the Capacity}\label{sec:hd-achievability}

\subsection{Achievable DF Rates for the Discrete Memoryless HD-MRC}

Now, we present achievable rates for the discrete memoryless HD-MRC using DF. For a chosen probability mass function of the states $p(\boldsymbol{s})$ and a fixed input distribution for each state $p(x_1, x_2, \dotsc, x_{D-1}|\boldsymbol{s})$, we define the \emph{reception rate} of node $i \in \{2,3,\dotsc,D\}$ as follows:
\begin{align} \label{eq:reception-rate}
r_i = & \sum_{\substack{\boldsymbol{s}\in\{L,T\}^{D-2}\\\text{s.t. } i \in \mathcal{L}(\boldsymbol{s})\cup \{D\}}}\Big[ p(\boldsymbol{s}) I \Big(X_1, X_{\{2, \dotsc, i-1\} \cap \mathcal{T}(\boldsymbol{s})}; Y_i \Big| \nonumber \\
& \quad X_{\{i,\dotsc,D-1\} \cap \mathcal{T}(\boldsymbol{s})}, X_{\mathcal{L}(\boldsymbol{s})}=\tilde{x}_{\mathcal{L}(\boldsymbol{s})},
\boldsymbol{S} = \boldsymbol{s}\Big) \Big].
\end{align}
The reception rate of node $i$ is the rate at which node $i$ can decode the source message using DF. The following rate is achievable on the discrete memoryless HD-MRC.

\begin{thm}\label{thm:df-hd-k-rc}
Consider the $D$-node HD-MRC. Rates up to the following value are achievable:
\begin{equation}
R_\text{DF} = \max_{p(\boldsymbol{s})} \max_{p(x_1, x_2, \dotsc, x_{D-1}|\boldsymbol{s})} \min_{i \in \{2,3,\dotsc,D\}} r_i,\label{eq:df-df-k-rc}
\end{equation}
for any $p(x_1, x_2, \dotsc, x_{D-1}|\boldsymbol{s}) = \prod_{k\in\mathcal{L}(\boldsymbol{s})} \delta(x_k=\tilde{x}_k) p(x_1,x_{\mathcal{T}(\boldsymbol{s})}|x_{\mathcal{L}(\boldsymbol{s})},\boldsymbol{s})$, where $r_i$ is defined in \eqref{eq:reception-rate}.
\end{thm}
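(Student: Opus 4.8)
The plan is to extend the block-Markov decode-forward scheme for the FD-MRC \cite{xiekumar03,kramergastpar04} to the half-duplex setting by having every codeword \emph{cycle through} all $2^{D-2}$ transmit/listen states, spending a fraction $p(\boldsymbol{s})$ of the channel uses in state $\boldsymbol{s}$. Concretely, I would transmit over $B$ blocks of $n$ channel uses each; within every block I partition the $n$ uses into $2^{D-2}$ consecutive sub-blocks, assigning roughly $n\, p(\boldsymbol{s})$ uses to state $\boldsymbol{s}$. Because the schedule $p(\boldsymbol{s})$ is fixed and known a priori to all nodes, the sub-block boundaries are common knowledge, so each node knows exactly when it transmits and when it listens. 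During the sub-block for state $\boldsymbol{s}$, each listening relay $k \in \mathcal{L}(\boldsymbol{s})$ is forced to send the dummy symbol $\tilde{x}_k$, and each transmitting node sends a coordinate of its codeword; this realizes the factorization $p(x_1,\dots,x_{D-1}\mid\boldsymbol{s}) = \prod_{k\in\mathcal{L}(\boldsymbol{s})}\delta(x_k=\tilde{x}_k)\, p(x_1,x_{\mathcal{T}(\boldsymbol{s})}\mid x_{\mathcal{L}(\boldsymbol{s})},\boldsymbol{s})$ assumed in the statement.

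Next I would set up the codebook respecting the node ordering $1<2<\dots<D$. Following regular encoding with superposition, node 1 generates a fresh message $w_b\in\mathcal{W}$ in block $b$, and each relay $i$ that has decoded $w_{b-1}$ re-encodes it cooperatively in block $b$; the codewords are generated so that the transmission of relay $i$ is superimposed on those of all upstream relays $2,\dots,i-1$ and the source, matching the conditioning structure $X_1, X_{\{2,\dots,i-1\}\cap\mathcal{T}(\boldsymbol{s})}$ versus the conditioned-upon $X_{\{i,\dots,D-1\}\cap\mathcal{T}(\boldsymbol{s})}$ appearing in \eqref{eq:reception-rate}. The point of the ordering is that when relay $i$ decodes, the transmissions of the downstream relays $j\ge i$ carry only information relay $i$ has already obtained, so they may be conditioned on (treated as known) rather than decoded.

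For decoding I would use backward decoding at the destination and sequential decoding at the relays (any of the standard equivalent variants---backward, sliding-window, or joint---will do). The key observation is that node $i$ receives useful signal only in the sub-blocks for states with $i\in\mathcal{L}(\boldsymbol{s})$, and since the sub-blocks are generated independently given the messages, the joint-typicality test at node $i$ accumulates mutual information additively across states. A standard packing-lemma argument then shows that node $i$ can reliably resolve its intended message provided $R < \sum_{\boldsymbol{s}:\, i\in\mathcal{L}(\boldsymbol{s})\cup\{D\}} p(\boldsymbol{s})\, I(\cdot)$, which is exactly $R < r_i$. Requiring every relay and the destination to decode yields $R < \min_{i\in\{2,\dots,D\}} r_i$, and taking $B\to\infty$ (so the rate loss $\tfrac{B}{B+\text{const}}$ vanishes) followed by optimizing over $p(\boldsymbol{s})$ and $p(x_1,\dots,x_{D-1}\mid\boldsymbol{s})$ delivers $R_\text{DF}$ in \eqref{eq:df-df-k-rc}.

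The main obstacle I anticipate is the error analysis for the accumulated typicality test: I must argue that a node listening only in a subset of states, while conditioning on the (known) inputs of downstream transmitting relays and on the dummy inputs $\tilde{x}_{\mathcal{L}(\boldsymbol{s})}$ of listening relays, still sees signals jointly typical with the correct message and atypical with the wrong ones, with the error exponent summing to $\sum_{\boldsymbol{s}} p(\boldsymbol{s})\, I(\cdot\,;\,\cdot\mid\cdot)$. Handling this cleanly requires care with the block-Markov indexing (which message each relay forwards in which block) and with the rounding of the sub-block lengths $n\, p(\boldsymbol{s})$ to integers; both contribute only vanishing rate penalties as $n,B\to\infty$, but making the conditioning and the decoding order mutually consistent across all relays simultaneously is the delicate bookkeeping at the heart of the proof.
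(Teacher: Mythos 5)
Your proposal follows essentially the same route as the paper's proof in Appendix~\ref{appendix:df}: regular block-Markov encoding with superposition respecting the node ordering, each codeword cycling through the states in sub-blocks of length $np(\boldsymbol{s})$, and a typicality test that accumulates mutual information additively over the states in which node $i$ listens (the paper fixes sliding-window decoding where you allow any standard variant, which is immaterial). One small caution: you write that relay $i$'s codeword is superimposed on those of the \emph{upstream} relays, whereas the correct (and paper's) construction generates $\boldsymbol{x}_i$ conditionally on the \emph{downstream} codewords $\boldsymbol{x}_{i+1},\dotsc,\boldsymbol{x}_{D-1}$, since only those carry messages relay $i$ has already decoded---your very next sentence describes this correctly, so it reads as a slip of terminology rather than a flaw in the scheme.
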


\begin{remark}
The first maximization in \eqref{eq:df-df-k-rc} is taken over all possible schedules, i.e., the probability mass functions of the states. The second maximization is taken over all possible input distributions for each state. The minimization is taken over the reception rates of all the relays and the destination---using DF, each relay and the destination must fully decode the source message. Note that the maximum DF rate only depends on the fractions of the states (i.e., the schedule), and does not depend on the sequence of the states.
\end{remark}

The aforementioned achievable rates are derived where the sequence in which the source message is decoded at the nodes is $1, 2, \dotsc, D$. This means node $i$ decodes from all nodes with indices smaller than $i$ (we term these nodes the {\em upstream} nodes), and transmits to all nodes with indices larger than $i$ (we term these nodes the {\em downstream } nodes). 
In general, different decoding sequences give different DF rates. Hence, one can optimize \eqref{eq:df-df-k-rc} over all possible sequences with the constraint that node 1 being the first node and node $D$ being the last node in the sequence~\cite{ongmotani10tcom}.

\indent\indent {\it Sketch of Proof of Theorem~\ref{thm:df-hd-k-rc}: } Here, we briefly summarize the proof of Theorem~\ref{thm:df-hd-k-rc}. Details are given in Appendix~\ref{appendix:df}. We use the idea from DF for the FD-MRC by Xie and Kumar~\cite[Section  IV.2]{xiekumar03}, which uses \emph{regular} encoding and \emph{sliding-window} (simultaneous) decoding technique. Block coding is used, and in each block of $n$ channel uses, a node's transmit signals are functions of its newly decoded message and of its previously decoded messages that its downstream nodes are forwarding. Decoding of a message, as the term window-decoding implies, happens across a few blocks. A node decodes a message from all upstream nodes using a few blocks of received signals. For the construction of the codewords, we use the idea from the HD-SRC, where each codeword is split into two parts (in the first part, the relay listens; in the second part, the relay transmits)~\cite{hostmadsen02}. Extending this idea to the MRC (assuming that all the nodes are synchronized), in each block, we cycle through all different states of the nodes' transmit/listen modes. In each codeword (a block of $n$ channel uses), if a node only listens in $n'$ ($n' \leq n$) channel uses, the rate at which it can decode the message will be reduced (compared to the full-duplex coding) accordingly. Furthermore, among the $n'$ received signals, not all upstream nodes are transmitting. This means that a node will only get signals from transmitting nodes in these $n'$ channel uses, and the rate at which it can decode the message can be determined accordingly. This is the reason why the order of the states does not affect the achievable rates, but the schedule does. 

\begin{remark}
Note that the achievable rate in Theorem~\ref{thm:df-hd-k-rc} is different from simply taking the time-sharing average of the DF rates for the FD-MRC~\cite[Theorem 3.1]{xiekumar03} over all possible transmit/listen states. Consider the single-relay channel (i.e., $D=3$) as an example. The DF rate for the full-duplex case is $R_\text{DF}^\text{f} = \min \{I(X_1;Y_2|X_2), I(X_1,X_2;Y_3)\}$. For its half-duplex counterpart, there are two possible states: $S=L$ and $S=T$. When the relay listens [$S=L$], we have $X_2 = \tilde{x}_2$, meaning that $I(X_1;Y_2|X_2) = I(X_1;Y_2|X_2=\tilde{x}_2)$ and $I(X_1,X_2;Y_3)=I(X_1;Y_3)$. So $R_{\text{DF}[S=L]}^\text{f}= \min \{I(X_1;Y_2|X_2=\tilde{x}_2), I(X_1;Y_3)\}$.  When the relay transmits [$S=T$], we have $Y_2 = \tilde{y}_2$, meaning that $I(X_1;Y_2|X_2) = 0$. So $R_{\text{DF}[S=T]}^\text{f}=0$. Denoting $\alpha =p(S=L)$ and taking the time-sharing average of these two DF rates, we obtain the following rate on the HD-MRC:
\begin{multline}
\alpha R_{\text{DF}[S=L]}^\text{f} + (1-\alpha) R_{\text{DF}[S=T]}^\text{f} \\ =  \min \{\alpha I(X_1;Y_2|X_2=\tilde{x}_2),\alpha I(X_1;Y_3)\}.
\end{multline}
The aforementioned rate can only be lower than the DF rate obtained in Theorem~\ref{thm:df-hd-k-rc} (using the same chosen $\alpha$), i.e.,
\begin{align}
R_\text{DF} =  \min \Big\{ &\alpha I(X_1;Y_2|X_2=\tilde{x}_2),  \nonumber \\ &\alpha I(X_1;Y_3)  + (1-\alpha) I(X_1,X_2;Y_3) \Big\}.
\end{align}
\end{remark}

\begin{remark}
In this paper, we only consider the ``full'' decode-forward strategy where all relays must fully decode the source message, and that imposes a constraint on each relay. Higher rates could possibly be obtained if we only use a subset of the relays to perform decode-forward (see, e.g., \cite{delcosoibars07,ongmotaniit08a}). Of course, omitting some relays also means that we cannot use these relays to send data to the downstream relays and the destination. Other papers have considered (i) using the relays that do not decode the source message to perform compress-forward operations~\cite{rostfettweis08a,rostfettweis08b}, and (ii) having the relays decode only parts of the source message~\cite{delcosoibars07,rostfettweis08a,rostfettweis08b}.
\end{remark}

\subsection{Achievable DF Rates for the Phase Fading HD-MRC}

For phase-fading channels, we have the following achievable rates:

\begin{thm}\label{thm:df-gaussian-hd-krc}
Consider the phase-fading $D$-node HD-MRC. Rates up to the following value are achievable:
\begin{equation}
R_\text{DF} = \max_{p(\boldsymbol{s})} \min_{i \in \{2,3,\dotsc,D\}} r_i, \label{eq:hd-gaussian-df-rate}
\end{equation}
where
\begin{equation}
r_i = \sum_{\substack{\boldsymbol{s} \in\{L,T\}^{D-2}\\\text{s.t. } i \in \mathcal{L}(\boldsymbol{s})\cup \{D\}}} \left[ p(\boldsymbol{s}) \Gamma \Bigg( \frac{\sum_{k \in \{1\} \cup ( \{2, \dotsc, i-1\} \cap \mathcal{T}(\boldsymbol{s}))} \lambda_{k,i}P_k}{N_{i}}  \Bigg) \right],
\end{equation}
which is  the reception rate of node $i$.
\end{thm}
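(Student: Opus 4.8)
The plan is to derive Theorem~\ref{thm:df-gaussian-hd-krc} as a specialization of the discrete-memoryless achievable rate in Theorem~\ref{thm:df-hd-k-rc}: the coding scheme and sliding-window decoding already establish that $\max_{p(\boldsymbol{s})}\max_{p(x_1,\dots,x_{D-1}|\boldsymbol{s})}\min_i r_i$ is achievable, with $r_i$ the reception rate in \eqref{eq:reception-rate}. For the phase-fading channel it therefore remains to (i) evaluate each conditional mutual information appearing in $r_i$ for a concrete input distribution, and (ii) argue that independent circularly symmetric complex Gaussian inputs with per-symbol powers $E[X_kX_k^\dagger]=P_k$ maximize the inner problem, so that the second maximization collapses and only $\max_{p(\boldsymbol{s})}$ survives.

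The central computation is the single mutual-information term $I(X_1,X_{\{2,\dots,i-1\}\cap\mathcal{T}(\boldsymbol{s})};Y_i\mid X_{\{i,\dots,D-1\}\cap\mathcal{T}(\boldsymbol{s})},X_{\mathcal{L}(\boldsymbol{s})}=\tilde{x}_{\mathcal{L}(\boldsymbol{s})},\boldsymbol{S}=\boldsymbol{s})$ for a listening node $i\in\mathcal{L}(\boldsymbol{s})\cup\{D\}$. First I would substitute the channel law: since $\tilde{x}_k=0$ for $k\in\mathcal{L}(\boldsymbol{s})$, the listening nodes contribute nothing, and $Y_i=\sqrt{\lambda_{1,i}}e^{j\theta(1,i)}X_1+\sum_{k\in\mathcal{T}(\boldsymbol{s})}\sqrt{\lambda_{k,i}}e^{j\theta(k,i)}X_k+Z_i$. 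Conditioning on the downstream transmit signals $X_{\{i,\dots,D-1\}\cap\mathcal{T}(\boldsymbol{s})}$ lets the receiver subtract their (known-phase) contributions, reducing the channel to the \emph{upstream} MISO channel $Y_i'=\sum_{k\in\{1\}\cup(\{2,\dots,i-1\}\cap\mathcal{T}(\boldsymbol{s}))}\sqrt{\lambda_{k,i}}e^{j\theta(k,i)}X_k+Z_i$, whose mutual information with $(X_1,X_{\{2,\dots,i-1\}\cap\mathcal{T}(\boldsymbol{s})})$ is what we must evaluate.

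The phase-fading structure is what makes this clean. Node $i$ (the receiver) knows all phases $\theta(k,i)$, so I would treat them as receiver side information; because the transmitters do not know the phases, the inputs are independent of them and the side information contributes no rate, leaving $E_\theta[\,\cdot\,]$ of a coherent Gaussian MISO mutual information. With independent Gaussian inputs of variance $P_k$, each channel realization gives $\log_2(1+\sum_k\lambda_{k,i}P_k/N_i)$, which is independent of the phases, so the expectation is trivial and yields exactly $\Gamma\!\big(\sum_{k\in\{1\}\cup(\{2,\dots,i-1\}\cap\mathcal{T}(\boldsymbol{s}))}\lambda_{k,i}P_k/N_i\big)$. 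Summing over the listening states with weights $p(\boldsymbol{s})$ reproduces the stated $r_i$.

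The step I expect to need the most care is justifying that the inner maximization over $p(x_1,\dots,x_{D-1}|\boldsymbol{s})$ is indeed solved by independent Gaussian inputs, so that it can be dropped entirely. The per-symbol power constraint fixes the second moments, and the unknown uniform phases preclude any coherent-combining gain from correlating inputs across nodes; invoking the optimality results cited in Remark~\ref{remark:gaussian-optimal} (\cite[Lemma 1]{hostmadsenzhang05}, \cite[Theorems 6 \& 7]{kramergastpar04}) then confirms that no input distribution beats independent Gaussian in every reception-rate term simultaneously. Once that is established, the maximization over inputs disappears, the $\min_i$ is taken over the explicit $r_i$ above, and the outer $\max_{p(\boldsymbol{s})}$ gives \eqref{eq:hd-gaussian-df-rate}.
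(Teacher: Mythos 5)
Your proposal is correct and follows essentially the same route as the paper, which simply specializes Theorem~\ref{thm:df-hd-k-rc} to independent Gaussian inputs and appeals to Remark~\ref{remark:gaussian-optimal} for their optimality under phase fading; you additionally spell out the evaluation of the mutual-information terms (cancellation of the known downstream signals and the phase-independence of the resulting MISO expression), which the paper leaves implicit. Note also that for the achievability claim itself, exhibiting the rate with independent Gaussian inputs already suffices; the optimality discussion is only needed to interpret $R_\text{DF}$ as the best DF rate.
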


\begin{proof}[Proof for Theorem~\ref{thm:df-gaussian-hd-krc}]
The rate is obtained from Theorem~\ref{thm:df-hd-k-rc} using independent Gaussian inputs (see Remark~\ref{remark:gaussian-optimal}).
\end{proof}

\subsection{The Optimal DF Scheduling Problem}\label{sec:hd-optimal-scheduling}
We define a schedule and an optimal DF schedule of the HD-MRC as follows. The schedule for a half-duplex network is defined as the probability mass function of all possible transmit/listen states, i.e., $p(\boldsymbol{s})$. An optimal DF schedule is the schedule that gives the highest DF rate.

Now, we formulate the optimal scheduling problem for the $D$-node phase-fading HD-MRC. An optimal schedule is a probability mass function $p^*(\boldsymbol{s})$ on $\{L,T\}^{D-2}$ that maximizes $R_\text{DF}$, i.e.,

\begin{multline}
p^*(\boldsymbol{s}) \in \argmax_{p(\boldsymbol{s})} \min_{i \in \{2,3,\dotsc,D\}} \\
\sum_{\substack{\boldsymbol{s} \in\{L,T\}^{D-2}\\\text{s.t. } i \in \mathcal{L}(\boldsymbol{s})\cup \{D\}}}\bigg[  p(\boldsymbol{s}) \Gamma \Bigg( \frac{\sum_{k \in \{1\} \cup ( \{2, \dotsc, i-1\} \cap \mathcal{T}(\boldsymbol{s}))} \lambda_{k,i}P_k}{N_{i}}  \Bigg)\Bigg].
 \label{eq:hd-optimal-schedule}
\end{multline}

\section{Solving Optimal Schedules for the HD-MRC}

\subsection{An Algorithm for General Maximin Problems}\label{sec:hd-minimax}
From the previous section, we know that the optimal DF scheduling problem for the phase-fading HD-MRC is a maximin optimization problem.
In this section, we propose an algorithm to solve a general class of maximin optimization problems. 
We will show how this algorithm can be simplified when used to solve the DF scheduling problem for the phase-fading HD-MRC. 

Consider the following maximin optimization problem:
\begin{equation}\label{eq:hd-general-maxmin}
\max_{\boldsymbol{p} \in \mathcal{G}} \min_{i \in \{1,2,\dotsc,K\}} \{ R_i(\boldsymbol{p}) \},
\end{equation}
where $\boldsymbol{p} = (p_0, p_1, \dotsc, p_{M-1})$ is a point in the closed convex set $\mathcal{G}$ in $\mathcal{R}^M$, and each $R_i(\boldsymbol{p})$ is a real and continuous function of $\boldsymbol{p}$. Both $M$ and $K$ are finite.
Our aim is to find the optimal $\boldsymbol{p}^*$ and $R^*$ given as follows:
\begin{subequations}
\begin{align}
\boldsymbol{p}^* &\in \argmax_{\boldsymbol{p} \in \mathcal{G}} \min_{i \in \{1,2,\dotsc,K\}}  R_i(\boldsymbol{p})\\
R^* &= \max_{\boldsymbol{p} \in \mathcal{G}} \min_{i \in \{1,2,\dotsc,K\}} R_i(\boldsymbol{p}) \triangleq \min_{i \in \{1,2,\dotsc,K\}} R_i(\boldsymbol{p}^*),
\end{align}
\end{subequations}
i.e., we want to find an optimal $\boldsymbol{p}$ that attains \eqref{eq:hd-general-maxmin}, denoted by $R^*$.

Solving the aforementioned maximin optimization problem corresponds to finding the optimal schedule and the best DF rate considered in this paper. In this application, $\mathcal{G}$ becomes the set of all feasible schedules, $\boldsymbol{p}$ is a schedule from $\mathcal{G}$, $R_i(\boldsymbol{p})$ is the reception rate for node $(i+1)$, $\boldsymbol{p}^*$ is an optimal schedule, and $R^*$ is the corresponding (maximum) DF rate.

For $\boldsymbol{p}_1, \boldsymbol{p}_2 \in \mathcal{G}$, we denote by $|\boldsymbol{p}_1 - \boldsymbol{p}_2| = \sqrt{ (p_{1,0}-p_{2,0})^2 + (p_{1,1} - p_{2,1})^2 + \dotsm + (p_{1,M-1} - p_{2,M-1})^2}$ the Euclidean distance between $\boldsymbol{p}_1$ and $\boldsymbol{p}_2$. 
\begin{defn}
For a set $\mathcal{B}=\{b_1,b_2,\dotsc,b_B\}$, we define the set of arguments of local maxima (denoted by $\arglmax$) as
\begin{equation}
\boldsymbol{p}' \in \arglmax_{\boldsymbol{p} \in \mathcal{G}} \{R_{b_1}(\boldsymbol{p}) = R_{b_2}(\boldsymbol{p}) = \dotsm = R_{b_B}(\boldsymbol{p}) \},
\end{equation}
iff
\begin{equation}
R_i(\boldsymbol{p}') = R_j(\boldsymbol{p}'), \quad \forall i,j \in \mathcal{B},
\end{equation}
and if there exists some positive $\epsilon$ such that for any $\boldsymbol{p} \in \mathcal{G}$ where $|\boldsymbol{p} - \boldsymbol{p}'| < \epsilon$, there exists at least one index $i \in \mathcal{B}$ such that
\begin{equation}
R_i(\boldsymbol{p}) \leq R_i(\boldsymbol{p'}).
\end{equation}
\end{defn}

In words, the aforementioned function says that if $\boldsymbol{p}$ is a local maximum for the set $\mathcal{B}$, we cannot simultaneously increase all $\{R_i(\cdot): i \in \mathcal{B}\}$ by ``moving'' $\boldsymbol{p}$ in any direction in $\mathcal{G}$.

Now, we establish a necessary condition for the optimal solution to \eqref{eq:hd-general-maxmin}.
\begin{lem}\label{lemma:algo1-1}
Let the solution to \eqref{eq:hd-general-maxmin} be $R^*$. There exists a set $\mathcal{B} \triangleq  \{b_1,b_2,\dotsc,b_B\} \subseteq \mathcal \{1,2,\dotsc,K\}$ and an optimal $\boldsymbol{p}^* \in \argmax_{\boldsymbol{p} \in \mathcal{G}} \min_{i \in \{1,2,\dotsc,K\}}  R_i(\boldsymbol{p})$ such that
\begin{align}
\boldsymbol{p}^* &\in \arglmax_{\boldsymbol{p}} \{R_{b_1}(\boldsymbol{p}) = R_{b_2}(\boldsymbol{p}) = \dotsm = R_{b_B}(\boldsymbol{p}) \} \label{eq:lemma-for-algo1}\\
R^*& = R_i(\boldsymbol{p}^*) < R_j(\boldsymbol{p}^*),\quad \forall i \in \mathcal{B}, \forall j \notin \mathcal{B}. \label{eq:lemma-for-algo1b}
\end{align}
\end{lem}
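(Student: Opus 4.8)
The plan is to take any optimal point $\boldsymbol{p}^*$ (whose existence is presupposed by the statement, and is guaranteed in the application since $\mathcal{G}$ is closed and, being the probability simplex, compact, while $\min_i R_i$ is continuous) and to let $\mathcal{B}$ be its \emph{active set}, namely the collection of indices attaining the minimum,
\[
\mathcal{B} \triangleq \{ i \in \{1,\dotsc,K\} : R_i(\boldsymbol{p}^*) = R^* \}.
\]
Since $R^* = \min_i R_i(\boldsymbol{p}^*)$ is attained, $\mathcal{B}$ is nonempty, and by construction every $j \notin \mathcal{B}$ satisfies $R_j(\boldsymbol{p}^*) > R^*$. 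This choice already delivers \eqref{eq:lemma-for-algo1b}, and it simultaneously supplies the first requirement in the definition of $\arglmax$, namely $R_i(\boldsymbol{p}^*) = R_j(\boldsymbol{p}^*)$ for all $i,j \in \mathcal{B}$. Hence the only substantive work remaining is to verify the second (local-maximum) requirement of the $\arglmax$ definition, which is what establishes \eqref{eq:lemma-for-algo1}.

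To prove the local-maximum property, I would argue by contradiction. Suppose it fails: then for every $\epsilon > 0$ there is a feasible $\boldsymbol{p}_\epsilon \in \mathcal{G}$ with $|\boldsymbol{p}_\epsilon - \boldsymbol{p}^*| < \epsilon$ at which \emph{every} active constraint is strictly increased, i.e.\ $R_i(\boldsymbol{p}_\epsilon) > R_i(\boldsymbol{p}^*) = R^*$ for all $i \in \mathcal{B}$. The crucial point is that the inactive constraints cannot obstruct this. Since there are only finitely many indices $j \notin \mathcal{B}$ and each has strict slack $R_j(\boldsymbol{p}^*) > R^*$, I would set $\delta \triangleq \min_{j \notin \mathcal{B}}\big(R_j(\boldsymbol{p}^*) - R^*\big) > 0$ and invoke continuity of the finitely many $R_j$ to obtain an $\epsilon_0 > 0$ such that $R_j(\boldsymbol{p}) > R^*$ for all $j \notin \mathcal{B}$ whenever $|\boldsymbol{p} - \boldsymbol{p}^*| < \epsilon_0$. (When $\mathcal{B} = \{1,\dotsc,K\}$ this step is vacuous.)

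Choosing $\epsilon < \epsilon_0$ in the negated hypothesis then produces a single point $\boldsymbol{p}_\epsilon \in \mathcal{G}$ at which $R_i(\boldsymbol{p}_\epsilon) > R^*$ holds simultaneously for the active indices (by the assumed failure) and for the inactive indices (by continuity), so that $\min_i R_i(\boldsymbol{p}_\epsilon) > R^*$. This contradicts $R^*$ being the maximal value of $\min_i R_i$ over $\mathcal{G}$. Therefore the local-maximum property holds, giving $\boldsymbol{p}^* \in \arglmax_{\boldsymbol{p}} \{R_{b_1}(\boldsymbol{p}) = \dotsm = R_{b_B}(\boldsymbol{p})\}$ and completing the argument.

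I expect the only delicate point to be the uniform treatment of the inactive constraints: one must use both their strict slack at $\boldsymbol{p}^*$ and the finiteness of $K$ to extract a \emph{single} neighborhood on which all of them remain above $R^*$, so that the contradiction can combine the increased active values with the preserved inactive values into one feasible point beating $R^*$. Note that convexity of $\mathcal{G}$ is not needed for this necessary condition; the proof uses only closedness (to guarantee attainment of $R^*$) and continuity of the $R_i$.
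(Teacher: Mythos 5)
Your proposal is correct and follows essentially the same route as the paper's proof: take $\mathcal{B}$ to be the active set at an optimal $\boldsymbol{p}^*$ (which yields \eqref{eq:lemma-for-algo1b} and the equality part of the $\arglmax$ definition immediately), then show by contradiction that failure of the local-maximum property would let one simultaneously raise all active $R_i$ while the inactive ones stay above $R^*$ by continuity, beating $R^*$. Your treatment of the inactive constraints via the uniform slack $\delta$ and the neighborhood $\epsilon_0$ is a slightly more careful spelling-out of a step the paper leaves implicit, but the argument is the same.
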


\begin{proof}[Proof of Lemma~\ref{lemma:algo1-1}]
By definition, $R^*$ takes the value of the minimum of $\{R_i(\boldsymbol{p}^*): i \in \{1,2,\dotsc,K\}\}$ for some $\boldsymbol{p}^*$. We let the set of all indices $\{i\}$ where $R_i(\boldsymbol{p}^*)$ equal $R^*$ be $\mathcal{B}$. This means we can always write $R^* = R_{b_1}(\boldsymbol{p}^*) = R_{b_2}(\boldsymbol{p}^*) = \dotsm = R_{b_B}(\boldsymbol{p}^*)$ for some $\boldsymbol{p}^*$, and  $R^* < R_j(\boldsymbol{p}^*)$, $\forall j \notin \mathcal{B}$. So, \eqref{eq:lemma-for-algo1b} must be true. Recall that $R_i(\boldsymbol{p})$ is a continuous function of $\boldsymbol{p}$.  Now, if \eqref{eq:lemma-for-algo1} is false, we can always increase all $\{R_i(\cdot): i \in \mathcal{B}\}$ simultaneously by some small positive amount by using a new $\boldsymbol{p}' = \boldsymbol{p}^* + \boldsymbol{\epsilon}$, where $\boldsymbol{\epsilon}$ is some length-$M$ vector with $|\boldsymbol{\epsilon}|$ being small, such that $\min_{i \in \mathcal{B}} R_i(\boldsymbol{p}') = R' > R^*$ and $R' < R_j(\boldsymbol{p}')$, $\forall j \notin \mathcal{B}$. This means if \eqref{eq:lemma-for-algo1} is false, $R^*$ cannot be the solution (contradiction).
\end{proof}

For $\boldsymbol{p}^*$ and $\mathcal{B}$ in Lemma~\ref{lemma:algo1-1} that attain the solution to  \eqref{eq:hd-general-maxmin}, we call the indices $\{b_1,b_2,\dotsc,b_B\}$ the bottleneck indices of the optimization problem.

Now, we present an algorithm to find the optimal $\boldsymbol{p}^*$ and $R^*$. This is a greedy algorithm that searches through all local maxima that satisfy the necessary condition in Lemma~\ref{lemma:algo1-1}.

\noindent \hrulefill
\begin{algorithm}[for the general maximin problem]\label{algorithm1}
\texttt{
\begin{enumerate}
\item Initialize $B=0$.
\item \label{hd-step-next-B} Set $B  \leftarrow B + 1$. For each subset/selection $\mathcal{B} = \{b_1,b_2,\dotsc,b_B\} \subseteq \{1,2,\dotsc,K\}$, find all $\boldsymbol{p}' \in \mathcal{G}$ such that
\begin{align}
&\boldsymbol{p}' \in \arglmax_{\boldsymbol{p}\in\mathcal{G}} \{R_{b_1}(\boldsymbol{p})= R_{b_2}(\boldsymbol{p})= \dotsm = R_{b_B}(\boldsymbol{p})\} \label{eq:algo-cond-1}\\
&R(\mathcal{B},\boldsymbol{p}') \triangleq R_{b_1}(\boldsymbol{p}') < R_j(\boldsymbol{p}'),\quad  \forall j \notin \mathcal{B}. \label{eq:algo-cond-2}
\end{align}
Repeat Step~\ref{hd-step-next-B} until $|\mathcal{B}|=K$.
\item The optimal solution is given by the largest $R(\mathcal{B},\boldsymbol{p}')$ and its corres-\\ ponding $\boldsymbol{p}'$.
\end{enumerate}
}
\end{algorithm}
\noindent\hrulefill\\

\subsection{A Simplified Algorithm for the Phase-Fading HD-MRC}

Now, we show that the algorithm proposed in the previous section can be simplified when used to solve the optimal scheduling problem for the phase-fading HD-MRC. Recall that the optimal scheduling for the $D$-node channel is to find $p^*(\boldsymbol{s})$ that attains
\begin{multline}\label{eq:opt-scheduling}
\max_{p(\boldsymbol{s})} \min_{i \in \{2,3, \dotsc, D\}} \\
\sum_{\substack{\boldsymbol{s} \in\{L,T\}^{D-2}\\\text{s.t. } i \in \mathcal{L}(\boldsymbol{s})\cup \{D\}}}\bigg[  p(\boldsymbol{s}) L \Bigg( \frac{\sum_{k \in \{1\} \cup ( \{2, \dotsc, i-1\} \cap \mathcal{T}(\boldsymbol{s}))} \lambda_{k,i}P_k}{N_{i}}  \Bigg)\Bigg].
\end{multline}
In the previous section, we have an algorithm to solve $\boldsymbol{p}$ that attains
\begin{equation}
\max_{\boldsymbol{p}\in\mathcal{G}} \min_{i \in \{1,2,\dotsc, K\}} R_i(\boldsymbol{p}),
\end{equation}
which is exactly the optimization problem in \eqref{eq:opt-scheduling} with the following substitution:
\begin{align}
&\bullet\; \boldsymbol{p} = (p_0, p_1 \dotsc, p_{2^{D-2}-1}),
\intertext{where each $p_j$ is uniquely mapped from each $p(\boldsymbol{s})$.}
&\bullet \; \mathcal{G} = \left\{\boldsymbol{p}: \sum_j p_j = 1, p_j \geq 0, \forall j \right\}.\\
&\bullet \; K = D-1.\\
&\bullet \; R_i(\boldsymbol{p}) = \sum_{\substack{\text{all states } \boldsymbol{s}\\ \text{s.t. node } (i+1) \text{ listens in state } \boldsymbol{s} }} \nonumber \\ &\quad\quad\quad\quad \left[  p(\boldsymbol{s}) \Gamma \left( \frac{\sum\limits_{\substack{\text{all nodes }k \leq i\\ \text{s.t. } i \text{ transmits in state } \boldsymbol{s}}} \lambda_{k,i+1}P_{k}}{N_{i+1}}  \right)\right].\label{eq:R-rate}
\end{align}
meaning that $R_i(\boldsymbol{p})$ is the reception rate of node $(i+1)$, i.e., $r_{i+1}$.

We further define the following:
\begin{defn}
For a set $\mathcal{B}=\{b_1,b_2,\dotsc,b_B\}$, we define the following global maximum function:
\begin{equation}
\boldsymbol{p}' \in \argmax_{\boldsymbol{p}\in\mathcal{G}} \{R_{b_1}(\boldsymbol{p}) = R_{b_2}(\boldsymbol{p}) = \dotsm = R_{b_B}(\boldsymbol{p}) \},
\end{equation}
iff
\begin{equation}
\boldsymbol{p}' \in \argmax_{\boldsymbol{p} \in \mathcal{G}} R_{b_1}(\boldsymbol{p}),
\end{equation}
subject to the constraint that
\begin{equation}
R_i(\boldsymbol{p}') = R_j(\boldsymbol{p}'), \quad \forall i,j \in \mathcal{B}.
\end{equation}
\end{defn}

Now, we will show that when applied to the phase-fading HD-MRC, Algorithm~\ref{algorithm1} can be simplified to the following:

\noindent \hrulefill
\begin{algorithm}[for the phase-fading HD-MRC]\label{algorithm2}
\texttt{
\begin{enumerate}
\item Initialize $B=0$.
\item  \label{hd-step-find-alpha-terminate-2} Set $B \leftarrow B+1$.
\begin{itemize}
\item If there exists some $\boldsymbol{p}' \in \mathcal{G}$ such that
\begin{align}
& \boldsymbol{p}' \in \argmax_{\boldsymbol{p}\in\mathcal{G}} \{R_{1}(\boldsymbol{p})= R_{2}(\boldsymbol{p})= \dotsm = R_{B}(\boldsymbol{p})\}, \label{eq:algo2-condition1}\\
& R_{1}(\boldsymbol{p}') < R_j(\boldsymbol{p}'), \quad \forall j \notin \mathcal{B}, \label{eq:algo2-condition2}
\end{align}
go to Step~\ref{hd-step-find-alpha-terminate-3}.
\item Else, repeat Step~\ref{hd-step-find-alpha-terminate-2}.
\end{itemize}
\item \label{hd-step-find-alpha-terminate-3} The algorithm terminates with an optimal schedule $\boldsymbol{p}^*=\boldsymbol{p}'$ and $R_\textrm{DF} = R_1(\boldsymbol{p}')$. 
\end{enumerate}
}
\end{algorithm}
\noindent\hrulefill\\

Comparing Algorithms~\ref{algorithm1} and \ref{algorithm2}, we need to prove the following for the latter:
\begin{enumerate}
\item  In Step~\ref{hd-step-next-B} of the algorithm, instead of considering every possible size-$B$ subset of $\{1,2,\dotsc,K\}$, we only need to consider $\{1,2,\dotsc,B\}$.
\item Instead of considering local maxima in \eqref{eq:algo-cond-1}, we only need to consider the global maximum in \eqref{eq:algo2-condition1}.
\item  We can stop the algorithm when a solution for \eqref{eq:algo2-condition1} and \eqref{eq:algo2-condition2} is found, i.e., we do not need to proceed to a larger $B$ if the terminating conditions are satisfied.
\end{enumerate}
The proof can be found in Appendix~\ref{appendix:algo2}. This also proves \cite[Conjecture 1]{ongjohnsonmotani09}.

\begin{remark}
Comparing Step~\ref{hd-step-next-B} in Algorithms~\ref{algorithm1} and \ref{algorithm2}, we need to consider $(2^{K}-1)$ combinations of $\mathcal{B}$ in the first algorithm for the general maximin optimization, but only need to consider at most $K$ combinations in the second algorithm for the phase-fading HD-MRC, where $K=D-1$.
\end{remark}

\begin{remark}
The set $\mathcal{B}$ that gives the solution corresponds to the set of nodes whose reception rates constrain the overall rate $R_\text{DF}$.
We see that the bottleneck nodes for the phase-fading HD-MRC are always the first $B$ relays, or all relays and the destination for the case of $B=D-1$. This means it is always the ``front'' nodes that constrain the overall DF rates.
\end{remark}

In Appendix~\ref{appendix-example}, we give an example showing how we can find an optimal schedule for the four-node phase-fading HD-MRC using Algorithm~\ref{algorithm2}, and obtain closed-form solutions under certain channel topologies.

\subsection{A Further-Simplified Algorithm for the rSNR-Degraded Phase-Fading HD-MRC}

We further simplify Algorithm~\ref{algorithm2} for the received signal-to-noise ratio (rSNR) degraded phase-fading HD-MRC, defined as follows.

\begin{defn}
A $D$-node MRC is said to be rSNR-degraded if and only if for all $i,j,k \in \{1,2,\dotsc,D\}$ where $i < j < k$, we have
\begin{equation}
\frac{\lambda_{i,j}P_i}{N_j} > \frac{\lambda_{i,k}P_i}{N_k}.\label{eq:rsnr-degraded}
\end{equation}
\end{defn}

In an rSNR-degraded channel, as the node index $k$ increases, the SNR from each lower index node to node $k$ becomes smaller.

\begin{figure*}[t]
\centering
\subfloat[]
{\resizebox{0.4\linewidth}{!}{  
\begin{picture}(0,0)%
\includegraphics{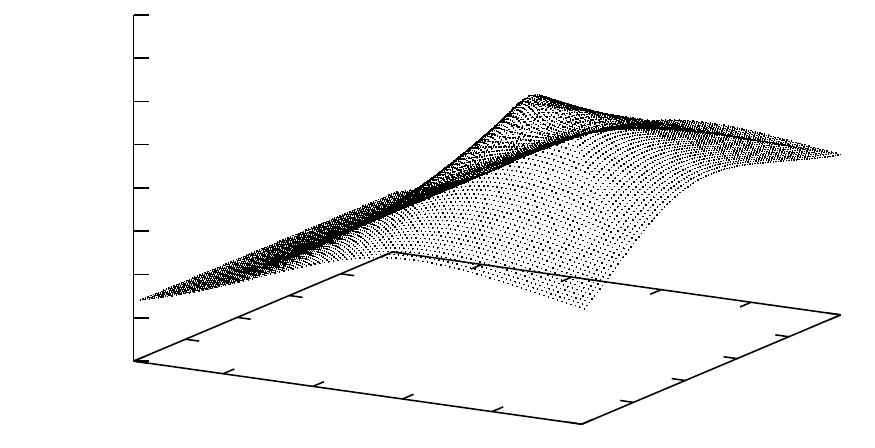}%
\end{picture}%
\setlength{\unitlength}{3947sp}%
\begingroup\makeatletter\ifx\SetFigFont\undefined%
\gdef\SetFigFont#1#2#3#4#5{%
  \reset@font\fontsize{#1}{#2pt}%
  \fontfamily{#3}\fontseries{#4}\fontshape{#5}%
  \selectfont}%
\fi\endgroup%
\begin{picture}(4185,2137)(1262,-3458)
\put(4003,-3443){\makebox(0,0)[b]{\smash{{\SetFigFont{10}{12.0}{\familydefault}{\mddefault}{\updefault} 0}}}}
\put(3573,-3382){\makebox(0,0)[b]{\smash{{\SetFigFont{10}{12.0}{\familydefault}{\mddefault}{\updefault} 20}}}}
\put(3143,-3322){\makebox(0,0)[b]{\smash{{\SetFigFont{10}{12.0}{\familydefault}{\mddefault}{\updefault} 40}}}}
\put(2712,-3261){\makebox(0,0)[b]{\smash{{\SetFigFont{10}{12.0}{\familydefault}{\mddefault}{\updefault} 60}}}}
\put(2282,-3200){\makebox(0,0)[b]{\smash{{\SetFigFont{10}{12.0}{\familydefault}{\mddefault}{\updefault} 80}}}}
\put(1852,-3140){\makebox(0,0)[b]{\smash{{\SetFigFont{10}{12.0}{\familydefault}{\mddefault}{\updefault} 100}}}}
\put(5363,-2897){\makebox(0,0)[b]{\smash{{\SetFigFont{10}{12.0}{\familydefault}{\mddefault}{\updefault} 0}}}}
\put(5114,-3001){\makebox(0,0)[b]{\smash{{\SetFigFont{10}{12.0}{\familydefault}{\mddefault}{\updefault} 20}}}}
\put(4866,-3106){\makebox(0,0)[b]{\smash{{\SetFigFont{10}{12.0}{\familydefault}{\mddefault}{\updefault} 40}}}}
\put(4617,-3211){\makebox(0,0)[b]{\smash{{\SetFigFont{10}{12.0}{\familydefault}{\mddefault}{\updefault} 60}}}}
\put(4369,-3316){\makebox(0,0)[b]{\smash{{\SetFigFont{10}{12.0}{\familydefault}{\mddefault}{\updefault} 80}}}}
\put(4120,-3421){\makebox(0,0)[b]{\smash{{\SetFigFont{10}{12.0}{\familydefault}{\mddefault}{\updefault} 100}}}}
\put(1754,-3103){\makebox(0,0)[rb]{\smash{{\SetFigFont{10}{12.0}{\familydefault}{\mddefault}{\updefault} 0}}}}
\put(1754,-2895){\makebox(0,0)[rb]{\smash{{\SetFigFont{10}{12.0}{\familydefault}{\mddefault}{\updefault} 0.1}}}}
\put(1754,-2687){\makebox(0,0)[rb]{\smash{{\SetFigFont{10}{12.0}{\familydefault}{\mddefault}{\updefault} 0.2}}}}
\put(1754,-2479){\makebox(0,0)[rb]{\smash{{\SetFigFont{10}{12.0}{\familydefault}{\mddefault}{\updefault} 0.3}}}}
\put(1754,-2271){\makebox(0,0)[rb]{\smash{{\SetFigFont{10}{12.0}{\familydefault}{\mddefault}{\updefault} 0.4}}}}
\put(1754,-2064){\makebox(0,0)[rb]{\smash{{\SetFigFont{10}{12.0}{\familydefault}{\mddefault}{\updefault} 0.5}}}}
\put(1754,-1856){\makebox(0,0)[rb]{\smash{{\SetFigFont{10}{12.0}{\familydefault}{\mddefault}{\updefault} 0.6}}}}
\put(1754,-1649){\makebox(0,0)[rb]{\smash{{\SetFigFont{10}{12.0}{\familydefault}{\mddefault}{\updefault} 0.7}}}}
\put(1754,-1441){\makebox(0,0)[rb]{\smash{{\SetFigFont{10}{12.0}{\familydefault}{\mddefault}{\updefault} 0.8}}}}
\put(1397,-2134){\rotatebox{90.0}{\makebox(0,0)[b]{\smash{{\SetFigFont{10}{12.0}{\familydefault}{\mddefault}{\updefault}$R_\text{DF}$ [bits/channel use]}}}}}
\put(2670,-3385){\makebox(0,0)[b]{\smash{{\SetFigFont{10}{12.0}{\familydefault}{\mddefault}{\updefault}$y_2$}}}}
\put(5215,-3219){\makebox(0,0)[b]{\smash{{\SetFigFont{10}{12.0}{\familydefault}{\mddefault}{\updefault}$y_3$}}}}
\put(1397,-2134){\rotatebox{90.0}{\makebox(0,0)[b]{\smash{{\SetFigFont{10}{12.0}{\familydefault}{\mddefault}{\updefault}$R_\text{DF}$ [bits/channel use]}}}}}
\end{picture}%
}
\label{fig:4mrc-ub}}
\hspace{1cm}
\subfloat[]
{\resizebox{0.4\linewidth}{!}{   
\begin{picture}(0,0)%
\includegraphics{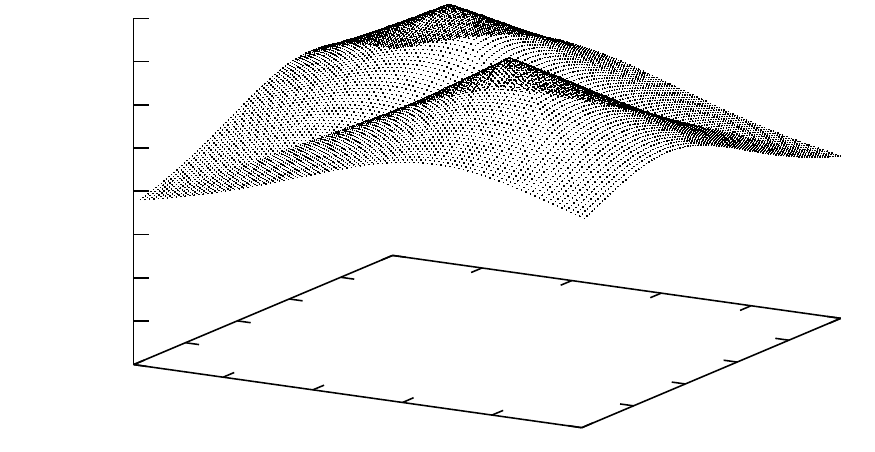}%
\end{picture}%
\setlength{\unitlength}{3947sp}%
\begingroup\makeatletter\ifx\SetFigFont\undefined%
\gdef\SetFigFont#1#2#3#4#5{%
  \reset@font\fontsize{#1}{#2pt}%
  \fontfamily{#3}\fontseries{#4}\fontshape{#5}%
  \selectfont}%
\fi\endgroup%
\begin{picture}(4185,2158)(1262,-3458)
\put(4003,-3443){\makebox(0,0)[b]{\smash{{\SetFigFont{10}{12.0}{\familydefault}{\mddefault}{\updefault} 0}}}}
\put(3573,-3382){\makebox(0,0)[b]{\smash{{\SetFigFont{10}{12.0}{\familydefault}{\mddefault}{\updefault} 20}}}}
\put(3143,-3322){\makebox(0,0)[b]{\smash{{\SetFigFont{10}{12.0}{\familydefault}{\mddefault}{\updefault} 40}}}}
\put(2712,-3261){\makebox(0,0)[b]{\smash{{\SetFigFont{10}{12.0}{\familydefault}{\mddefault}{\updefault} 60}}}}
\put(2282,-3200){\makebox(0,0)[b]{\smash{{\SetFigFont{10}{12.0}{\familydefault}{\mddefault}{\updefault} 80}}}}
\put(1852,-3140){\makebox(0,0)[b]{\smash{{\SetFigFont{10}{12.0}{\familydefault}{\mddefault}{\updefault} 100}}}}
\put(5363,-2897){\makebox(0,0)[b]{\smash{{\SetFigFont{10}{12.0}{\familydefault}{\mddefault}{\updefault} 0}}}}
\put(5114,-3001){\makebox(0,0)[b]{\smash{{\SetFigFont{10}{12.0}{\familydefault}{\mddefault}{\updefault} 20}}}}
\put(4866,-3106){\makebox(0,0)[b]{\smash{{\SetFigFont{10}{12.0}{\familydefault}{\mddefault}{\updefault} 40}}}}
\put(4617,-3211){\makebox(0,0)[b]{\smash{{\SetFigFont{10}{12.0}{\familydefault}{\mddefault}{\updefault} 60}}}}
\put(4369,-3316){\makebox(0,0)[b]{\smash{{\SetFigFont{10}{12.0}{\familydefault}{\mddefault}{\updefault} 80}}}}
\put(4120,-3421){\makebox(0,0)[b]{\smash{{\SetFigFont{10}{12.0}{\familydefault}{\mddefault}{\updefault} 100}}}}
\put(1754,-2895){\makebox(0,0)[rb]{\smash{{\SetFigFont{10}{12.0}{\familydefault}{\mddefault}{\updefault} 0.1}}}}
\put(1754,-2687){\makebox(0,0)[rb]{\smash{{\SetFigFont{10}{12.0}{\familydefault}{\mddefault}{\updefault} 0.2}}}}
\put(1754,-2479){\makebox(0,0)[rb]{\smash{{\SetFigFont{10}{12.0}{\familydefault}{\mddefault}{\updefault} 0.3}}}}
\put(1754,-2271){\makebox(0,0)[rb]{\smash{{\SetFigFont{10}{12.0}{\familydefault}{\mddefault}{\updefault} 0.4}}}}
\put(1754,-2064){\makebox(0,0)[rb]{\smash{{\SetFigFont{10}{12.0}{\familydefault}{\mddefault}{\updefault} 0.5}}}}
\put(1754,-1856){\makebox(0,0)[rb]{\smash{{\SetFigFont{10}{12.0}{\familydefault}{\mddefault}{\updefault} 0.6}}}}
\put(1754,-1649){\makebox(0,0)[rb]{\smash{{\SetFigFont{10}{12.0}{\familydefault}{\mddefault}{\updefault} 0.7}}}}
\put(1754,-1441){\makebox(0,0)[rb]{\smash{{\SetFigFont{10}{12.0}{\familydefault}{\mddefault}{\updefault} 0.8}}}}
\put(1397,-2134){\rotatebox{90.0}{\makebox(0,0)[b]{\smash{{\SetFigFont{10}{12.0}{\familydefault}{\mddefault}{\updefault}$R_\text{ub}$ [bits/channel use]}}}}}
\put(2670,-3385){\makebox(0,0)[b]{\smash{{\SetFigFont{10}{12.0}{\familydefault}{\mddefault}{\updefault}$y_2$}}}}
\put(5215,-3219){\makebox(0,0)[b]{\smash{{\SetFigFont{10}{12.0}{\familydefault}{\mddefault}{\updefault}$y_3$}}}}
\put(1397,-2134){\rotatebox{90.0}{\makebox(0,0)[b]{\smash{{\SetFigFont{10}{12.0}{\familydefault}{\mddefault}{\updefault}$R_\text{ub}$ [bits/channel use]}}}}}
\end{picture}%
}
\label{fig:4mrc-ub-diff}}
\caption{DF rates versus capacity upper bound for the four-node MRC with varying relay positions. (a) DF rates. (b) Capacity upper bound.}
\end{figure*}

\begin{remark}
The rSNR-degraded MRC is different from the degraded MRC defined by Xie and Kumar~\cite[Definition 2.3]{xiekumar03}. In the degraded MRC, $(X_1, \dotsc, X_{k-1}) \rightarrow (Y_k,X_k, \dotsc, X_{D-1}) \rightarrow (Y_{k+1}, \dotsc, Y_D)$ forms a Markov chain for each $k$. Applying this to the AWGN channel, roughly speaking, node $(k+1)$ receives the signals from nodes $\{1,2,\dotsc, k-1\}$ through $Y_k$. 
We see that the degraded channel is defined with respect to the channel from all upstream nodes (collectively) to a node, while the rSNR-degraded channel is defined for node pairs. 
Take $D=3$ for example. An example of the degraded condition is $Y_2 = \sqrt{\lambda_{1,2}}e^{j\theta(1,2)}X_1 + Z_2$ and $Y_3 = \sqrt{\lambda'_{1,3}}e^{j\theta'(1,3)}\left[\sqrt{\lambda_{1,2}}e^{j\theta(1,2)} X_1 + Z_2\right] + \sqrt{\lambda_{2,3}}e^{j\theta(2,3)}X_2 +  Z_3$, where $\lambda'_{1,3}$ and $\theta'(1,3)$ are known to node 3. The rSNR-degraded condition simply requires that $\frac{\lambda_{1,2}P_2}{N_2} > \frac{\lambda_{1,3}P_1}{N_3}$.
\end{remark}

For the rSNR-degraded phase-fading HD-MRC, the optimal scheduling problem can be found using the following algorithm.

\noindent \hrulefill
\begin{algorithm}[for the rSNR-degraded phase-fading HD-MRC]\label{algorithm3}
{\ }\newline
\indent\texttt{Find $\boldsymbol{p}^*$ that attains
\begin{equation}
\max_{\boldsymbol{p}\in\mathcal{G}}\{R_1(\boldsymbol{p}) = R_2(\boldsymbol{p}) = \dotsm = R_{K}(\boldsymbol{p})\}.\label{eq:simplified-algo}
\end{equation}
}
\end{algorithm}
\noindent\hrulefill\\

\begin{proof}[Proof for Algorithm~\ref{algorithm3}]
Suppose that an optimal schedule and $R_\text{DF}$ are given by 
\begin{align}
\boldsymbol{p}^* &\in \max_{\boldsymbol{p}} \{R_1(\boldsymbol{p}) = R_2(\boldsymbol{p}) = \dotsm = R_B (\boldsymbol{p}) \}\label{eq:condition-3-M}\\
R_\text{DF} = R_B(\boldsymbol{p}^*) &< R_j(\boldsymbol{p}^*), \forall j \notin \mathcal{B},\label{eq:condition-four-M}
\end{align}
for some $\mathcal{B}=\{1,2,\dotsc,B\} \subset  \{1,2,\dotsc, K\}$, i.e., $B < K$.

From the proof for Algorithm~\ref{algorithm2}, we have established that node $(B+1)$ must always listen in the optimal schedule.
So, the reception rate of node $(B+1)$ is
\begin{align}
r_{B+1} &= R_B(\boldsymbol{p}^*) \nonumber\\ &= \sum_{\text{all states } \boldsymbol{s}} \left[ p(\boldsymbol{s}) \Gamma \left( \frac{\sum\limits_{\substack{\text{all nodes }i \leq B\\ \text{s.t. }i \text{ transmits in state } \boldsymbol{s}}}\lambda_{i,B+1}P_i}{N_{B+1}}   \right) \right].
\end{align}

Now, the reception rate of node $(B+2)$ is
\begin{subequations}
\begin{align}
r_{B+2} & = R_{B+1}(\boldsymbol{p}^*)\\ &\leq \sum_{\text{all states } \boldsymbol{s}} \left[ p(\boldsymbol{s}) \Gamma \left( \frac{\sum\limits_{\substack{\text{all nodes }i \leq B\\ \text{s.t. }i \text{ transmits in state } \boldsymbol{s}}}\lambda_{i,B+2}P_i}{N_{B+2}}   \right) \right]\label{eq:max-R_M+2}\\
& < R_B(\boldsymbol{p}^*).\label{eq:R_M+2-less-thanR_M+1}
\end{align}
\end{subequations}
Eqn.~\eqref{eq:max-R_M+2} is derived because node $(B+1)$ always listens; so nodes $(B+2)$ and $(B+1)$ both decode from the same set of nodes in each state. An equality is achieved when node $(B+2)$ also always listens in all states. Eqn.~\eqref{eq:R_M+2-less-thanR_M+1} follows from the definition of rSNR-degraded MRC.

So, condition~\eqref{eq:condition-four-M} cannot be satisfied for any $\mathcal{B} \subset \{1,2,\dotsc, K\}$, and therefore $B < K$ will not give an optimal schedule. This means the only possible case is $\mathcal{B}=\{1,2,\dotsc,K\}$.
\end{proof}

\begin{remark}
Compared to the original maximin optimization in \eqref{eq:hd-optimal-schedule}, we only need to solve the max optimization in \eqref{eq:simplified-algo} to obtain an optimal schedule for the  rSNR-degraded phase-fading HD-MRC.
\end{remark}

\begin{remark}
For the rSNR-degraded phase-fading HD-MRC, the bottleneck nodes are always all the relays and the destination, regardless of the network topology. This means in an optimal schedule, the relays coordinate their transmit/listen states so that no single relay or a subset of relays will constrain the overall DF rate.
\end{remark}

\emph{A note on complexity:} In Algorithm~\ref{algorithm3}, we only need to consider one case for $\mathcal{B}$, compared to
$(2^{D-1}-1)$ cases in Algorithm~\ref{algorithm1}, However, equating the equalities in \eqref{eq:simplified-algo} has a time complexity of $O(2^{D-2})$, where $2^{D-2}$ is the length of the vector $\boldsymbol{p}$. A heuristic algorithm~\cite{wangongmotani09} can be used to reduced the dimension of $\boldsymbol{p}$, but doing that does not guarantee optimality.

\section{Numerical Computations}

\subsection{The Four-Node Phase-Fading HD-MRC}\label{sec:computations}

In this section, we demonstrate that Algorithm~\ref{algorithm2} provides insights into how one would operate the network. We consider the four-node phase-fading HD-MRC where the nodes are placed on a straight line. The node coordinates are as follows: node 1 $(0,0)$, node 2 $(0,y_2)$, node 3 $(0,y_3)$, and node 4 $(0,100)$. We set the channel parameters to be $\kappa=1$, $\eta=2$, and fix the per-symbol power constraints to be $P_i=10$ for all $i \in \{1,2,3\}$, and the noise variance at the receivers to be $N_j=0.01$ for all $j \in \{2,3,4\}$. Define $p_0=p(L,L)$, $p_1=p(L,T)$, $p_2=p(T,L)$, and $p_3=p(T,T)$.

Fig.~\ref{fig:4mrc-ub} shows DF rates for the four-node phase-fading HD-MRC with varying relay positions using Algorithm~\ref{algorithm2}, and Fig.~\ref{fig:4mrc-ub-diff} shows the capacity upper bound. Note that at any $y_2$ and $y_3$, the DF rate is strictly below the capacity upper bound.

\begin{figure*}[t]
\centering
\subfloat[]
{\resizebox{0.37\linewidth}{!}{  
\begin{picture}(0,0)%
\includegraphics{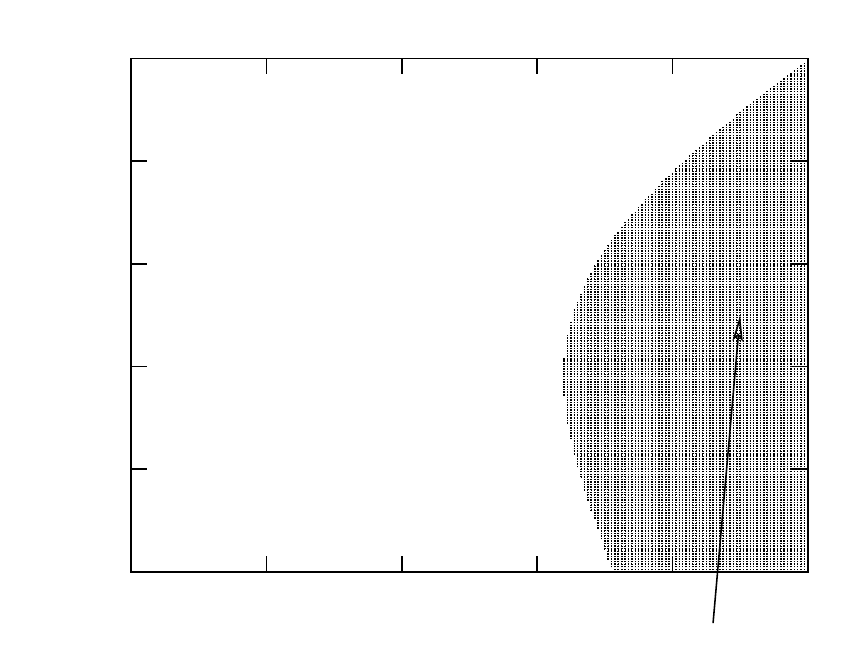}%
\end{picture}%
\setlength{\unitlength}{3947sp}%
\begingroup\makeatletter\ifx\SetFigFont\undefined%
\gdef\SetFigFont#1#2#3#4#5{%
  \reset@font\fontsize{#1}{#2pt}%
  \fontfamily{#3}\fontseries{#4}\fontshape{#5}%
  \selectfont}%
\fi\endgroup%
\begin{picture}(4023,3176)(1334,-3999)
\put(1888,-3623){\makebox(0,0)[rb]{\smash{{\SetFigFont{10}{12.0}{\familydefault}{\mddefault}{\updefault} 0}}}}
\put(1888,-3130){\makebox(0,0)[rb]{\smash{{\SetFigFont{10}{12.0}{\familydefault}{\mddefault}{\updefault} 20}}}}
\put(1888,-2637){\makebox(0,0)[rb]{\smash{{\SetFigFont{10}{12.0}{\familydefault}{\mddefault}{\updefault} 40}}}}
\put(1888,-2144){\makebox(0,0)[rb]{\smash{{\SetFigFont{10}{12.0}{\familydefault}{\mddefault}{\updefault} 60}}}}
\put(1888,-1651){\makebox(0,0)[rb]{\smash{{\SetFigFont{10}{12.0}{\familydefault}{\mddefault}{\updefault} 80}}}}
\put(1888,-1158){\makebox(0,0)[rb]{\smash{{\SetFigFont{10}{12.0}{\familydefault}{\mddefault}{\updefault} 100}}}}
\put(1963,-3748){\makebox(0,0)[b]{\smash{{\SetFigFont{10}{12.0}{\familydefault}{\mddefault}{\updefault} 0}}}}
\put(2613,-3748){\makebox(0,0)[b]{\smash{{\SetFigFont{10}{12.0}{\familydefault}{\mddefault}{\updefault} 20}}}}
\put(3263,-3748){\makebox(0,0)[b]{\smash{{\SetFigFont{10}{12.0}{\familydefault}{\mddefault}{\updefault} 40}}}}
\put(3912,-3748){\makebox(0,0)[b]{\smash{{\SetFigFont{10}{12.0}{\familydefault}{\mddefault}{\updefault} 60}}}}
\put(4562,-3748){\makebox(0,0)[b]{\smash{{\SetFigFont{10}{12.0}{\familydefault}{\mddefault}{\updefault} 80}}}}
\put(5212,-3748){\makebox(0,0)[b]{\smash{{\SetFigFont{10}{12.0}{\familydefault}{\mddefault}{\updefault} 100}}}}
\put(1481,-2329){\rotatebox{90.0}{\makebox(0,0)[b]{\smash{{\SetFigFont{10}{12.0}{\familydefault}{\mddefault}{\updefault}$y_3$}}}}}
\put(3587,-3935){\makebox(0,0)[b]{\smash{{\SetFigFont{10}{12.0}{\familydefault}{\mddefault}{\updefault}$y_2$}}}}
\put(3587,-970){\makebox(0,0)[b]{\smash{{\SetFigFont{10}{12.0}{\familydefault}{\mddefault}{\updefault}MRC: $(0,0)$, $(0,y_2)$, $(0,y_3)$, $(0,100)$}}}}
\put(4562,-3919){\makebox(0,0)[lb]{\smash{{\SetFigFont{10}{12.0}{\familydefault}{\mddefault}{\updefault}$|\mathcal{B}|=1$}}}}
\put(2938,-2144){\makebox(0,0)[lb]{\smash{{\SetFigFont{10}{12.0}{\familydefault}{\mddefault}{\updefault}$|\mathcal{B}|=3$}}}}
\end{picture}%
}
\label{fig:cases-vs-relay-positions}}
\hspace{2cm}
\subfloat[]
{\resizebox{0.37\linewidth}{!}{  
\begin{picture}(0,0)%
\includegraphics{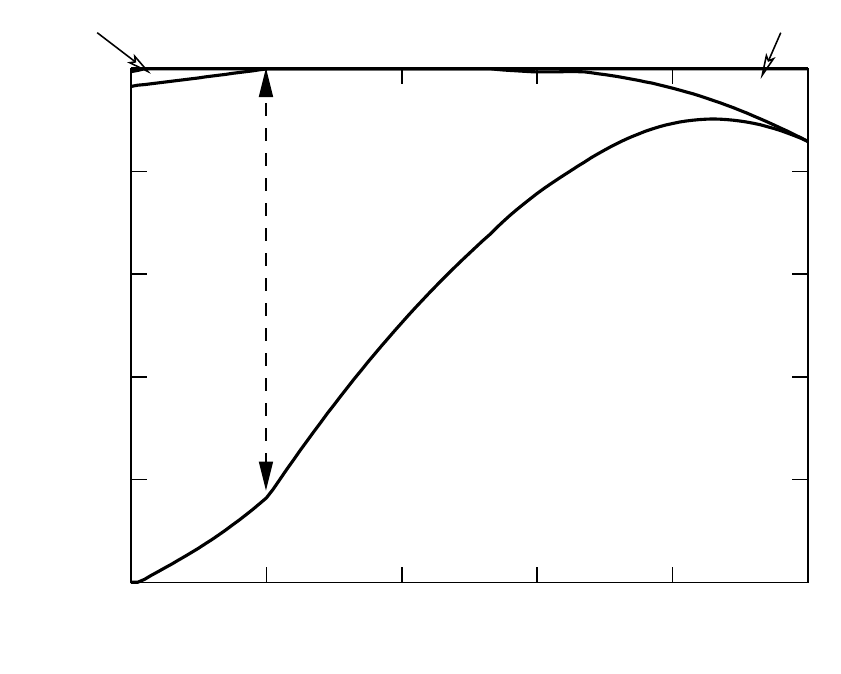}%
\end{picture}%
\setlength{\unitlength}{3947sp}%
\begingroup\makeatletter\ifx\SetFigFont\undefined%
\gdef\SetFigFont#1#2#3#4#5{%
  \reset@font\fontsize{#1}{#2pt}%
  \fontfamily{#3}\fontseries{#4}\fontshape{#5}%
  \selectfont}%
\fi\endgroup%
\begin{picture}(4023,3223)(1334,-3999)
\put(2551,-2760){\rotatebox{90.0}{\makebox(0,0)[lb]{\smash{{\SetFigFont{10}{12.0}{\familydefault}{\mddefault}{\updefault}$p^*(L,T)$ at $y_3=20$}}}}}
\put(1888,-3130){\makebox(0,0)[rb]{\smash{{\SetFigFont{10}{12.0}{\familydefault}{\mddefault}{\updefault} 0.2}}}}
\put(1888,-2637){\makebox(0,0)[rb]{\smash{{\SetFigFont{10}{12.0}{\familydefault}{\mddefault}{\updefault} 0.4}}}}
\put(1888,-2144){\makebox(0,0)[rb]{\smash{{\SetFigFont{10}{12.0}{\familydefault}{\mddefault}{\updefault} 0.6}}}}
\put(1888,-1651){\makebox(0,0)[rb]{\smash{{\SetFigFont{10}{12.0}{\familydefault}{\mddefault}{\updefault} 0.8}}}}
\put(1888,-1158){\makebox(0,0)[rb]{\smash{{\SetFigFont{10}{12.0}{\familydefault}{\mddefault}{\updefault} 1}}}}
\put(1963,-3748){\makebox(0,0)[b]{\smash{{\SetFigFont{10}{12.0}{\familydefault}{\mddefault}{\updefault} 0}}}}
\put(2613,-3748){\makebox(0,0)[b]{\smash{{\SetFigFont{10}{12.0}{\familydefault}{\mddefault}{\updefault} 20}}}}
\put(3263,-3748){\makebox(0,0)[b]{\smash{{\SetFigFont{10}{12.0}{\familydefault}{\mddefault}{\updefault} 40}}}}
\put(3912,-3748){\makebox(0,0)[b]{\smash{{\SetFigFont{10}{12.0}{\familydefault}{\mddefault}{\updefault} 60}}}}
\put(4562,-3748){\makebox(0,0)[b]{\smash{{\SetFigFont{10}{12.0}{\familydefault}{\mddefault}{\updefault} 80}}}}
\put(5212,-3748){\makebox(0,0)[b]{\smash{{\SetFigFont{10}{12.0}{\familydefault}{\mddefault}{\updefault} 100}}}}
\put(1481,-2329){\rotatebox{90.0}{\makebox(0,0)[b]{\smash{{\SetFigFont{10}{12.0}{\familydefault}{\mddefault}{\updefault}$p(s)$}}}}}
\put(3587,-3935){\makebox(0,0)[b]{\smash{{\SetFigFont{10}{12.0}{\familydefault}{\mddefault}{\updefault}$y_3$}}}}
\put(3512,-970){\makebox(0,0)[b]{\smash{{\SetFigFont{10}{12.0}{\familydefault}{\mddefault}{\updefault}MRC: $(0,0)$, $(0,66)$, $(0,y_3)$, $(0,100)$}}}}
\put(3588,-2884){\makebox(0,0)[lb]{\smash{{\SetFigFont{10}{12.0}{\familydefault}{\mddefault}{\updefault}$(L,L)$}}}}
\put(1476,-911){\makebox(0,0)[lb]{\smash{{\SetFigFont{10}{12.0}{\familydefault}{\mddefault}{\updefault}$(T,L)$}}}}
\put(4887,-911){\makebox(0,0)[lb]{\smash{{\SetFigFont{10}{12.0}{\familydefault}{\mddefault}{\updefault}$(T,L)$}}}}
\put(2928,-1744){\makebox(0,0)[lb]{\smash{{\SetFigFont{10}{12.0}{\familydefault}{\mddefault}{\updefault}$(L,T)$}}}}
\put(1888,-3623){\makebox(0,0)[rb]{\smash{{\SetFigFont{10}{12.0}{\familydefault}{\mddefault}{\updefault} 0}}}}
\end{picture}%
}
\label{fig:y1-66}}
\caption{Optimal schedules for the four-node phase-fading HD-MRC: Graph (a) shows $|\mathcal{B}|$ (the number of bottleneck nodes) for which an optimal schedule can be found. Graph (b) shows optimal schedules for $y_2=66$ and varying node 3's position: for each $y_3$, the optimal schedule consists of probabilities of the four states, where the probabilities of the states take the height of their respective regions at $y_3$ [the probability of the state $(T,T)$ is always zero].}
\end{figure*}

Fig.~\ref{fig:cases-vs-relay-positions} shows which case in Algorithm~\ref{algorithm2}---whether $|\mathcal{B}|$ is $1$, $2$, or $3$---corresponds to the optimal schedule, for the node positions $y_1=0$, $y_4=100$, and $0 \leq y_2, y_3 \leq 100$. From \eqref{eq:hd-case-3-4mrc-a} in the derivation of the optimal schedule in Appendix~\ref{appendix-example}, we know that  $|\mathcal{B}|=1$ has an optimal schedule only when node 2 is further from node 1 than node 3 is from node 1, or $y_2 \geq y_3$. From Fig.~\ref{fig:cases-vs-relay-positions}, we see that $|\mathcal{B}|=2$ has no optimal schedule for these network topologies. In addition, if the nodes in the network are arranged such $0 < y_2 < y_3 < 100$, then we only need to consider $|\mathcal{B}|=3$.

Fig.~\ref{fig:y1-66} shows optimal schedules for varying node 3's position  $(0,y_3)$, while fixing the rest of the nodes' positions at $y_1=0$, $y_2=66$, and $y_4=100$. From Fig.~\ref{fig:cases-vs-relay-positions}, we see that for $y_2=66$ and $20 \leq y_3 < 53$, $\boldsymbol{p}^*$ is found in the case $|\mathcal{B}|=1$, meaning that the optimal schedules are $(p_0^*, 1-p_0^*,0,0)$. We can also see this from Fig.~\ref{fig:y1-66} that we only need to operate the network in two states, $(L,T)$ and $(L,L)$, for $20 \leq y_3 < 53$. This means node 2 does not need to transmit, and only node 3 needs to toggle between the transmitting and the listening modes. Meanwhile, for $y_2=66$, $0 \leq y_3 < 20$ or $53 \leq y_3 < 100$, three states are used in the optimal schedule.

\subsection{rSNR-degraded FD-MRCs and HD-MRCs}\label{sec:fd-vs-hd}
Next, we compare DF rates for the $D$-node phase-fading FD-MRCs and HD-MRCs. We consider the line topology with two settings:
\begin{enumerate}
\item the coordinate of node $i$ is $(i,0)$;
\item the coordinate of node $i$ is $(\frac{20i}{D-1},0)$.
\end{enumerate}
Note that the distance between the source and the destination increases in the first case, and remains fixed in the second case.

We fix the per-symbol power constraints $P_i = 10$ for all $i$ and the noise variance at the receivers $N_i = 1$ for all $i$. As these MRCs are rSNR-degraded, we can calculate DF rates for the half-duplex channels using Algorithm~\ref{algorithm3}.

\begin{figure}
\begin{center}
\resizebox{0.9\linewidth}{!}{   
\begin{picture}(0,0)%
\includegraphics{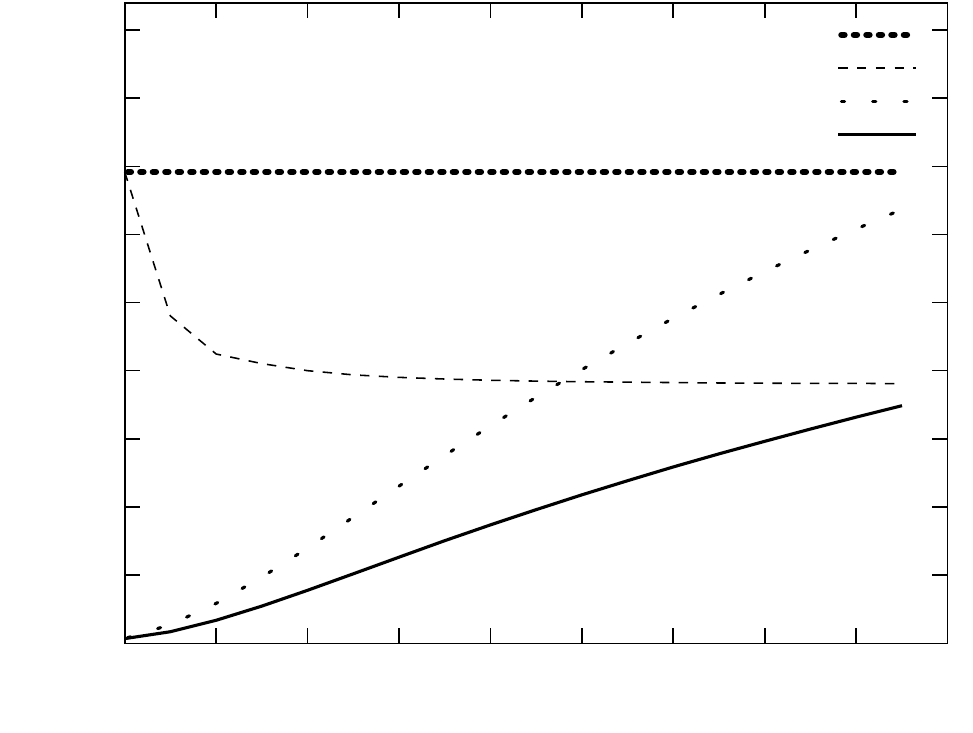}%
\end{picture}%
\setlength{\unitlength}{3947sp}%
\begingroup\makeatletter\ifx\SetFigFont\undefined%
\gdef\SetFigFont#1#2#3#4#5{%
  \reset@font\fontsize{#1}{#2pt}%
  \fontfamily{#3}\fontseries{#4}\fontshape{#5}%
  \selectfont}%
\fi\endgroup%
\begin{picture}(4655,3516)(1240,-4015)
\put(1763,-3648){\makebox(0,0)[rb]{\smash{{\SetFigFont{10}{12.0}{\familydefault}{\mddefault}{\updefault} 0}}}}
\put(1763,-3321){\makebox(0,0)[rb]{\smash{{\SetFigFont{10}{12.0}{\familydefault}{\mddefault}{\updefault} 0.5}}}}
\put(1763,-2994){\makebox(0,0)[rb]{\smash{{\SetFigFont{10}{12.0}{\familydefault}{\mddefault}{\updefault} 1}}}}
\put(1763,-2667){\makebox(0,0)[rb]{\smash{{\SetFigFont{10}{12.0}{\familydefault}{\mddefault}{\updefault} 1.5}}}}
\put(1763,-2339){\makebox(0,0)[rb]{\smash{{\SetFigFont{10}{12.0}{\familydefault}{\mddefault}{\updefault} 2}}}}
\put(1763,-2012){\makebox(0,0)[rb]{\smash{{\SetFigFont{10}{12.0}{\familydefault}{\mddefault}{\updefault} 2.5}}}}
\put(1763,-1685){\makebox(0,0)[rb]{\smash{{\SetFigFont{10}{12.0}{\familydefault}{\mddefault}{\updefault} 3}}}}
\put(1763,-1358){\makebox(0,0)[rb]{\smash{{\SetFigFont{10}{12.0}{\familydefault}{\mddefault}{\updefault} 3.5}}}}
\put(1763,-1031){\makebox(0,0)[rb]{\smash{{\SetFigFont{10}{12.0}{\familydefault}{\mddefault}{\updefault} 4}}}}
\put(1763,-704){\makebox(0,0)[rb]{\smash{{\SetFigFont{10}{12.0}{\familydefault}{\mddefault}{\updefault} 4.5}}}}
\put(1838,-3773){\makebox(0,0)[b]{\smash{{\SetFigFont{10}{12.0}{\familydefault}{\mddefault}{\updefault} 2}}}}
\put(2277,-3773){\makebox(0,0)[b]{\smash{{\SetFigFont{10}{12.0}{\familydefault}{\mddefault}{\updefault} 4}}}}
\put(2716,-3773){\makebox(0,0)[b]{\smash{{\SetFigFont{10}{12.0}{\familydefault}{\mddefault}{\updefault} 6}}}}
\put(3155,-3773){\makebox(0,0)[b]{\smash{{\SetFigFont{10}{12.0}{\familydefault}{\mddefault}{\updefault} 8}}}}
\put(3594,-3773){\makebox(0,0)[b]{\smash{{\SetFigFont{10}{12.0}{\familydefault}{\mddefault}{\updefault} 10}}}}
\put(4032,-3773){\makebox(0,0)[b]{\smash{{\SetFigFont{10}{12.0}{\familydefault}{\mddefault}{\updefault} 12}}}}
\put(4471,-3773){\makebox(0,0)[b]{\smash{{\SetFigFont{10}{12.0}{\familydefault}{\mddefault}{\updefault} 14}}}}
\put(4910,-3773){\makebox(0,0)[b]{\smash{{\SetFigFont{10}{12.0}{\familydefault}{\mddefault}{\updefault} 16}}}}
\put(5349,-3773){\makebox(0,0)[b]{\smash{{\SetFigFont{10}{12.0}{\familydefault}{\mddefault}{\updefault} 18}}}}
\put(5788,-3773){\makebox(0,0)[b]{\smash{{\SetFigFont{10}{12.0}{\familydefault}{\mddefault}{\updefault} 20}}}}
\put(1387,-2049){\rotatebox{90.0}{\makebox(0,0)[b]{\smash{{\SetFigFont{10}{12.0}{\familydefault}{\mddefault}{\updefault}$R_\text{DF}$ [bits/channel use]}}}}}
\put(3813,-3960){\makebox(0,0)[b]{\smash{{\SetFigFont{10}{12.0}{\familydefault}{\mddefault}{\updefault}Number of nodes, $D$}}}}
\put(5188,-727){\makebox(0,0)[rb]{\smash{{\SetFigFont{10}{12.0}{\familydefault}{\mddefault}{\updefault}Full Duplex, $v_i=i$}}}}
\put(5188,-886){\makebox(0,0)[rb]{\smash{{\SetFigFont{10}{12.0}{\familydefault}{\mddefault}{\updefault}Half Duplex, $v_i=i$}}}}
\put(5188,-1045){\makebox(0,0)[rb]{\smash{{\SetFigFont{10}{12.0}{\familydefault}{\mddefault}{\updefault}Full Duplex, $v_i=20i/(D-1)$}}}}
\put(5188,-1204){\makebox(0,0)[rb]{\smash{{\SetFigFont{10}{12.0}{\familydefault}{\mddefault}{\updefault}Half Duplex, $v_i=20i/(D-1)$}}}}
\end{picture}%
}
\caption{$R_\text{DF}$ vs $D$ for phase-fading MRCs, where the coordinate of node $i$ is $(v_i,0)$}
\label{fig:fd-vs-hd}
\end{center}
\end{figure}

Fig.~\ref{fig:fd-vs-hd} shows DF rates for these two networks as the number of relays increases. As expected, DF rates for the half-duplex network are upper-bounded by those for the full-duplex network~\cite{kramergastpar04}. The DF rates are significantly reduced in the half-duplex networks compared to the full-duplex counterparts as the number of relays increases. In the first setting where all adjacent nodes are separated a unit distance apart, the full-duplex DF rate remains constant regardless of the network size (or equivalently, the source-destination distance) as the link from node 1 to node 2 is always the bottleneck. This is because any added node sees a higher SNR compared to its upstream nodes, and hence the reception rate of the added node is always higher. This is not true when the nodes cannot listen and transmit simultaneously. As the relays do transmit all the time (otherwise their reception rate will be zero), the added node may not have a higher reception rate compared to its upstream nodes. We illustrate this by comparing two networks of $D=2$ and $D=3$. For $D=2$, the source (node 1) always transmits and the destination (node 2) always listen. So, the DF rates for the HD-MRC, denoted by $R_{\text{DF}[D=2]}^\text{h}$, equals that for the FD-MRC, denoted by $R_{\text{DF}[D=2]}^\text{f}$. Next, consider the FD-MRC with $D=3$. The reception rate at node 2 (which can listen all the time) remains the same at $R_{\text{DF}[D=2]}^\text{f}$. As nodes 1 and 2 always transmit, node 3 sees a strictly higher SNR, and has a strictly higher reception rate. So, the DF rate for $D=3$, which is constrained by the reception rates of nodes 2 and 3, is the same as that for $D=2$, i.e.,  $R_{\text{DF}[D=3]}^\text{f} = R_{\text{DF}[D=2]}^\text{f}$. Now consider the HD-MRC with $D=3$. The relay (node 2) allocates its time to transmit and to listen, and hence its reception rate is lower than $R_{\text{DF}[D=2]}^\text{h}$. In addition, node 3 receives transmission from node 1 (which is further away) all the time, but from node 2 only a fraction of the time. From Algorithm~\ref{algorithm3}, we know that node 2 picks the fraction of transmission time such that its reception rate equals that at node 3. Hence the DF rate for the HD-MRC is lower than that of the FD-MRC for $D=3$, i.e.,  $R_{\text{DF}[D=3]}^\text{h} < R_{\text{DF}[D=3]}^\text{f}$
This example illustrates a fundamental difference between the full-duplex and the half-duplex networks.

\section{Reflections}

We have investigated achievable rates and optimal schedules for the decode-forward (DF) coding strategy for the half-duplex multiple-relay channel (HD-MRC). The code construction in this paper differs from that of the traditional half-duplex network. Traditionally, a node transmits an entire codeword in one state duration. The transmission time for a codeword is shorter than the duration that the nodes stay in a particular state. In any state, data can only be sent to and from fixed sets of nodes. Thus, the order of the states becomes important as data need to be routed from the source to the destination through the relays. Hence, the order of the states and the time fraction of the states are parameters of the optimization.

Inspired by the coding scheme for the half-duplex single-relay channel, where the codewords of the source and of the relay are split into two parts (one part when the relay listen, and another part when the relay transmits), we approached the scheduling problem in this perspective. Each codeword cycles through all states, meaning that the states change during one codeword duration and they are independent of how the data flow. The advantage of this approach is that the order of the states is not important as far as the rate is concerned and we have shown that only the time fraction (or the probability) of different states needs to be optimized.

We have proposed an algorithm to solve a general class of maximin optimization problems. When applied to the phase-fading HD-MRC, the algorithm can be simplified. Further simplification to max optimizations has been shown to be possible for the rSNR-degraded phase-fading HD-MRC. 
The simplified algorithms have allowed us to obtain closed-form solutions for optimal schedules for certain network topologies.

From the algorithms, we realized that the bottleneck nodes of the phase-fading HD-MRC are always the first $B$ relays for some $1 \leq B \leq D-2$, or all the relays plus the destination. Furthermore, the bottleneck nodes for the rSNR-degraded phase-fading HD-MRC are always all the relays plus the destination, independent of the channel topology. This shows that in the half-duplex network, the transmit/listen state attempts to balance the transmission bottleneck of the half-duplex network.

We have also discovered that, unlike the full-duplex case, where DF achieves the cut-set upper bound and hence the capacity when the relays are close to the source~\cite{kramergastpar04}, DF in the half duplex case does not achieve the cut-set bound.

\appendices

\section{Proof of Theorem~\ref{thm:hd-mrc-upper-bound}} \label{appendix:ub}

Consider the capacity upper bound of the half-duplex multiterminal network~\cite[Corollary 2]{khojastepoursabharwal03} with the following additional constraints due to the HD-MRC model considered in the paper:
\begin{itemize}
\item We only consider the rate from node 1 to node $D$, which is $R$.
\item There are at most $2^{D-2}$ states, and the time fraction of the network in state $\boldsymbol{s}$ is $p(\boldsymbol{s})$.
\end{itemize}
We have the following upper bound on the maximum achievable rate $R$ (i.e., the capacity $C$):
\begin{equation}
C \leq \sup_{p(\boldsymbol{s})} \min_\mathcal{Q} \sum_{\boldsymbol{s} \in \{L,T\}^{D-2}} p(\boldsymbol{s})  I(X_1,X_\mathcal{Q}; Y_{\mathcal{Q}^c},Y_D | X_{\mathcal{Q}^c}, \boldsymbol{S} ), \label{eq:fd-cut-set}
\end{equation}
for some $p(x_1, x_2, \dotsc, x_{D-1}|\boldsymbol{s})$, where the supremum is taken over all possible schedule $p(\boldsymbol{s})$, and the minimization is taken over all possible $\mathcal{Q} \subseteq \mathcal{R}$. Here, $\mathcal{Q}^c = \mathcal{R} \setminus \mathcal{Q}$, and $\mathcal{R}$ is the set of all relays.

In state $\boldsymbol{s}$, we have the following:
\begin{itemize}
\item For $X_{\mathcal{Q}}$: $X_{\mathcal{Q} \cap \mathcal{T}(\boldsymbol{s})}$ are transmitted, and $X_{\mathcal{Q} \cap \mathcal{L}(\boldsymbol{s})} = \tilde{x}_{\mathcal{Q} \cap \mathcal{L}(\boldsymbol{s})}$ are known to all nodes,
\item For $Y_{\mathcal{Q}^c}$: $Y_{\mathcal{Q}^c\cap\mathcal{L}(\boldsymbol{s})}$ are received, and $Y_{\mathcal{Q}^c\cap\mathcal{T}(\boldsymbol{s})}=\tilde{y}_{\mathcal{Q}^c\cap\mathcal{T}(\boldsymbol{s})}$ are known to all nodes.
\item For $X_{\mathcal{Q}^c}$: $X_{\mathcal{Q}^c \cap \mathcal{T}(\boldsymbol{s})}$ are transmitted, and $X_{\mathcal{Q}^c \cap \mathcal{L}(\boldsymbol{s})} = \tilde{x}_{\mathcal{Q} ^c\cap \mathcal{L}(\boldsymbol{s})}$ are known to all nodes.
\end{itemize}
Note that for four random variables $A$, $B$, $C$, and $D$, if $B$ takes on a fixed value $e$, we have
\begin{align}
I(A,B=e;C|D) &= I(B=e;C|D) + I(A;C|D,B=e)\nonumber \\ & = I(A;C|D,B=e)\\
I(A;B=e,C|D) &= I(A;B=e|D) + I(A;C|D,B=e) \nonumber \\ & = I(A;C|D,B=e).
\end{align}
So the mutual information term on the right-hand side of \eqref{eq:fd-cut-set} becomes $I(X_1,X_{\mathcal{Q}\cap\mathcal{T}(\boldsymbol{s})}; Y_{\mathcal{Q}^c\cap\mathcal{L}(\boldsymbol{s})},Y_D |$ $X_{\mathcal{Q}^c \cap \mathcal{T}(\boldsymbol{s})},X_{\mathcal{Q}^c \cap \mathcal{L}(\boldsymbol{s})} = \tilde{x}_{\mathcal{Q}^c \cap \mathcal{L}(\boldsymbol{s})}, X_{\mathcal{Q} \cap \mathcal{L}(\boldsymbol{s})} = \tilde{x}_{\mathcal{Q} \cap \mathcal{L}(\boldsymbol{s})},$ $Y_{\mathcal{Q}^c\cap\mathcal{T}(\boldsymbol{s})}=\tilde{y}_{\mathcal{Q}^c\cap\mathcal{T}(\boldsymbol{s})} )$. Rearranging the variables, we get $ I\Big(X_1,X_{\mathcal{Q}\cap\mathcal{T}(\boldsymbol{s})}; Y_{\mathcal{Q}^c\cap\mathcal{L}(\boldsymbol{s})},Y_D | X_{\mathcal{Q}^c\cap\mathcal{T}(\boldsymbol{s})}, X_{\mathcal{L}(\boldsymbol{s})} = \tilde{x}_{\mathcal{L}(\boldsymbol{s})},$ $Y_{\mathcal{Q}^c\cap\mathcal{T}(\boldsymbol{s})}=\tilde{y}_{\mathcal{Q}^c\cap\mathcal{T}(\boldsymbol{s})} \Big)$.

Under the half-duplex constraints, we also have
\begin{multline}
p(x_1, x_2, \dotsc, x_{D-1}|\boldsymbol{s}) \\ = \prod_{k\in\mathcal{L}(\boldsymbol{s})} \delta(x_k=\tilde{x}_k) p(x_1,x_{\mathcal{T}(\boldsymbol{s})}|x_{\mathcal{L}(\boldsymbol{s})}, \boldsymbol{s}). \label{eq:hd-all-possible-input}
\end{multline}

Since we can choose  any schedule $p(\boldsymbol{s})$ and any input distribution $p(x_1,x_2,\dotsc,x_{D-1}|\boldsymbol{s})$ of the form in \eqref{eq:hd-all-possible-input}, we get Theorem~\ref{thm:hd-mrc-upper-bound}.
{\ } \hfill $\blacksquare$

\section{Proof of Theorem~\ref{thm:df-hd-k-rc}} \label{appendix:df}
As mentioned, the random coding proof for the FD-MRC~\cite[Section  IV.2]{xiekumar03} is adapted to our context of the HD-MRC. In addition, we use the idea of splitting the codewords into different parts, one for each state~\cite{hostmadsen02}. We consider $n$ transmissions or channel uses as a block and send $(M-D+2)$ messages over $M$ blocks. For any $n$, we can choose a sufficiently large $M$ such that the effective rate $\frac{nR(M-D+2)}{nM}$ can be made as close to $R$ as desired. As there are $(D-2)$ relays, the total number of states $\boldsymbol{s}$ is $2^{D-2}$. Let the states be $\boldsymbol{s}_0$, $\boldsymbol{s}_1$, $\dotsc$, $\boldsymbol{s}_{2^{D-2}-1}$, and the probabilities of the states be $p_0$, $p_1$, $\dotsc$, $p_{2^{D-2}-1}$ respectively. First, we generate codebooks for the HD-MRC.

\noindent {\bf Codebook generation:}

\begin{enumerate}
\item Fix, $p(\boldsymbol{s})$, i.e., $p_0$, $p_1$, $\dotsc$, $p_{2^{D-2}-1}$. We assume that $n_i \triangleq np_i$ are integers.
\item 
Fix the states for all channel uses in each block as follows: 
 for the first $n_0=np_0$ channel uses, the state is fixed at $\boldsymbol{s}_0$, for the next $n_1=np_1$ channel uses, the state is $\boldsymbol{s}_1$, and so on.
\item Fix the probability mass functions
\begin{multline}
p(x_1,x_2,\dotsc, x_{D-1}|\boldsymbol{s})=p(x_{D-1}|\boldsymbol{s})p(x_{D-2}|x_{D-1},\boldsymbol{s})\\ \dotsm p(x_1|x_2,x_3,\dotsc,x_{D-1},\boldsymbol{s}),\label{eq:channel-input-prob}
\end{multline}
one for each state $\boldsymbol{s}$. 
\item For the $p(x_{D-1}|\boldsymbol{s})$ chosen in \eqref{eq:channel-input-prob}, generate $2^{nR}$ independently and identically distributed (i.i.d.) $n$-sequences $\boldsymbol{x}_{D-1} \in \mathcal{X}_{D-1}^{n}$ according to
\begin{equation}
p(\boldsymbol{x}_{D-1}) = \prod_{i=0}^{2^{D-2}-1} \prod_{t=\sum_{j=-1}^{i-1}n_j+1}^{\sum_{j=-1}^{i}n_j} p(x_{D-1}[t]|\boldsymbol{s}_i),
\end{equation}
where $n_{-1}=0$, and $x_{i}[t]$ denotes the $t$-th input from node $i$ into the channel.
Index the codewords as $\boldsymbol{x}_{D-1}(w_{D-1})$, $w_{D-1} \in \{1,2,\dotsc,2^{nR}\}$.
\item\label{step-repeat} Repeat the following steps for $k= D-2, D-3, \dotsc,1$:
\begin{enumerate}
\item For the $p(x_k|x_{k+1},x_{k+2},\dotsc,x_{D-1},\boldsymbol{s})$ chosen in \eqref{eq:channel-input-prob}, and for each
\begin{align}
& \boldsymbol{x}_{k+1:D-1}(w_{k+1},w_{k+2},\dotsc,w_{D-1})\nonumber \\  & \triangleq
\Big(\boldsymbol{x}_{k+1}(w_{k+1}|w_{k+2},\dotsc,w_{D-1}), \nonumber \\ &\quad\quad \boldsymbol{x}_{k+2}(x_{k+2}|w_{k+3},\dotsc,w_{D-1}),\nonumber \\ &\quad\quad  \dotsc,  \boldsymbol{x}_{D-1}(w_{D-1}) \Big), \nonumber
\end{align}
generate $2^{nR}$ conditionally independent $n$-sequences $\boldsymbol{x}_k \in \mathcal{X}_k^n$ according to
\begin{align*}
& p(\boldsymbol{x}_k | \boldsymbol{x}_{k+1:D-1}(w_{k+1},w_{k+2},\dotsc,w_{D-1}))\\ &= \prod_{i=0}^{2^{D-2}-1} \prod_{t=\sum_{j=-1}^{i-1}n_j+1}^{\sum_{j=-1}^{i}n_j} \nonumber \\ & \quad p \Big( x_k[t] \Big| x_{k+1:D-1}(w_{k+1},w_{k+2},\dotsc,w_{D-1})[t],\boldsymbol{s}_i \Big),
\end{align*}
where
\begin{align}
& x_{k+1:D-1}(w_{k+1},w_{k+2},\dotsc,w_{D-1}) [t] \nonumber \\  & = \Big(x_{k+1}(w_{k+1}|w_{k+2},\dotsc,w_{D-1})[t],
\nonumber \\ &\quad\quad  x_{k+2}(x_{k+2}|w_{k+3},\dotsc,w_{D-1})[t],  \nonumber \\ &\quad\quad\dotsc, x_{D-1}(w_{D-1})[t]\Big).\nonumber
\end{align}
Index the codewords $\boldsymbol{x}_k(w_k|w_{k+1},w_{k+2},$ $\dotsc,w_{D-1})$, $w_k \in \{1,2,\dotsc,2^{nR}\}$.
\item Repeat Step~\ref{step-repeat}(a) with $k \leftarrow (k-1)$.
\end{enumerate}
\end{enumerate}

\noindent {\bf Encoding and Decoding:}

Let the message sequence be $w(1)$, $w(2)$, $\dotsc$, $w(M-D+2)$. These messages will be sent in $M$ blocks of transmissions. Let node $k$'s estimate of message $w(m)$ be $\hat{w}_k(m)$.
\begin{enumerate}
\item In block $m \in [1,M]$, node $k \in [1,D-1]$ sends $\boldsymbol{x}_k \big(\hat{w}_k(m-k+1) \big| \hat{w}_k(m-k), \dotsc, \hat{w}_k(m-D+2)\big)$.
We set $\hat{w}_1(l)=w(l)$ for all $l$, and $\hat{w}_k(m)=1$ for $m \leq 0$ and $m \geq M-D+3$. The transmission from node $k$ in block $m$ depends on the messages that it has decoded in the past $(D-k)$ blocks.
\item In block $m \in [1,M]$, node $k \in \{2,\dotsc,D\}$ receives $\boldsymbol{y}_k(m)$.
\item At the end of block $m$, having already decoded $\hat{w}_k(1)$, $\hat{w}_k(2)$, $\dotsc$, $\hat{w}_k(m-k+1)$ in the previous blocks,  node $k$ declares $\hat{w}_k(m-k+2) = w$,
if there exists a unique $w$ such that the following is true for all $j \in \{0, 1, \dotsc, k-2\}$.
\begin{align}\label{eq:typical-w}
\Big\{ & \boldsymbol{x}_{k-1-j}(w|\hat{w}_k(m-k+1),\dots, \hat{w}_k(m-j-D+2)), \nonumber \\
& \boldsymbol{x}_{k-j}(\hat{w}_k(m-k+1)| \hat{w}_k(m-k)), \dots, \nonumber \\
&\quad\quad\;\; \hat{w}_k(m-j-D+2)),\nonumber \\
& \dotsc, \boldsymbol{x}_{D-1}(\hat{w}_k(m-j-D+2),\boldsymbol{y}_k(m-j), \underline{\boldsymbol{s}} \Big\} \nonumber \\
& \in \mathcal{A}_\epsilon^{(n)}(X_{k-1-j},X_{k-j},\dotsc,X_{D-1},Y_k,\boldsymbol{S}),
\end{align}
where  $\underline{\boldsymbol{s}}\triangleq(\boldsymbol{s}[1],\boldsymbol{s}[2],\dotsc,\boldsymbol{s}[n])$ is the vector of states for the block of $n$ channel uses, and $\mathcal{A}_\epsilon^{(n)}(Z_1,Z_2,\dotsc,Z_M)$ is the set of $\epsilon$-typical $n$-sequences $(\boldsymbol{z}_1,\boldsymbol{z}_2,\dotsc,\boldsymbol{z}_M)$ defined as follows.
\end{enumerate}

\begin{defn}
An  $\epsilon$-typical $n$-sequences $\mathcal{A}_\epsilon^{(n)}(Z_1,Z_2,$ $\dotsc,Z_M)$ with respect to a distribution $p(z_1,z_2,\dotsc, z_M)$ on $\mathcal{Z}_1 \times \mathcal{Z}_2 \times \dotsm \times \mathcal{Z}_M$ is the set of sequences $(\boldsymbol{z}_1, \boldsymbol{z}_2, \dotsc, \boldsymbol{z}_M) \in \mathcal{Z}_1^n \times \mathcal{Z}_2^n \times \dotsm \times \mathcal{Z}_M^n$ such that
\begin{equation}
\left\vert -\frac{1}{n}\log p(\boldsymbol{u}) - H(U)  \right\vert < \epsilon,\quad \forall U \subseteq \{Z_1,Z_2,\dotsc,Z_M\},
\end{equation}
where $p(\boldsymbol{u}) = \prod_{i=1}^n p(u[t])$.
\end{defn}

\noindent {\bf Probability of Error Analysis:}
We define the following:
\begin{align}
A_\text{correct}(m) & \triangleq \text{the event that no decoding error occurs at any} \nonumber \\ & \quad\; \text{node in the first $m$ blocks},\\
\mathcal{W}_{k,j}(m) &\triangleq \left\{\text{all } w \in \{1,2,\dotsc,2^{nR}\} \text{ that satisfy \eqref{eq:typical-w}} \right\},\\
\mathcal{W}_k(m) &\triangleq \bigcap_{j=0}^{k-2}\mathcal{W}_{k,j}(m).\label{eq:typical-w-2}
\end{align}
$\mathcal{W}_k(m)$ is the set containing node $k$'s estimate(s) for the message $w(m-k+2)$. This decoding step is done in block $m$.

\begin{figure*}[!t] 
\normalsize
\setcounter{mytempeqncnt}{\value{equation}}
\setcounter{equation}{56}
\begin{subequations}
\begin{align}
&I(X_1, X_2, \dotsc, X_{k-1};Y_k | X_k, X_{k+1}\dotsc, X_{D-1},\boldsymbol{S})\nonumber\\
&= \sum_{\boldsymbol{s} \in \{\boldsymbol{s}_0,\boldsymbol{s}_1,\dotsc,\boldsymbol{s}_{2^{D-1}-1}\}} p(\boldsymbol{s}) I(X_1, X_2, \dotsc, X_{k-1};Y_k | X_k, X_{k+1}\dotsc, X_{D-1},\boldsymbol{S}=\boldsymbol{s}) \label{eq:a}\\
&=  \sum_{\substack{\boldsymbol{s} \in \{\boldsymbol{s}_0,\boldsymbol{s}_1,\dotsc,\boldsymbol{s}_{2^{D-1}-1}\}\\ \text{s.t. } k \in \mathcal{L}(\boldsymbol{s}) \cup \{D\}}} p(\boldsymbol{s}) I(X_1, X_2, \dotsc, X_{k-1};Y_k | X_k, X_{k+1}, \dotsc, X_{D-1},\boldsymbol{S}=\boldsymbol{s})\label{eq:known-reception}\\
&=  \sum_{\substack{\boldsymbol{s} \in \{\boldsymbol{s}_0,\boldsymbol{s}_1,\dotsc,\boldsymbol{s}_{2^{D-1}-1}\}\\ \text{s.t. } k \in \mathcal{L}(\boldsymbol{s}) \cup \{D\} }} p(\boldsymbol{s}) I(X_1, X_{\{2, \dotsc,k-1\} \cap \mathcal{T}(\boldsymbol{s})};Y_k | X_{\{k,\dotsc,D-1\} \cap \mathcal{T}(\boldsymbol{s})}, \{ X_j=\tilde{x}_j: j \in \mathcal{L}(\boldsymbol{s}) \}, \boldsymbol{S}=\boldsymbol{s}).\label{eq:known-transmission}
\end{align}
\end{subequations}
\setcounter{equation}{\value{mytempeqncnt}}
\hrulefill
\vspace*{4pt} \end{figure*}

Using a property of $\epsilon$-typical sequences~\cite[Theorem 15.2.1]{coverthomas06}, with sufficiently large $n$, we can bound the probability that the correct message $w(m-k+2)$ is not in the set $\mathcal{W}_{k,j}(m)$ to be arbitrarily small, i.e.,
\begin{equation}
\Pr \{ w(m-k+2) \notin \mathcal{W}_{k,j}(m) |A_\text{correct}(m-1) \} < \epsilon. \label{eq:by-aep}
\end{equation}
Following~\cite[Theorem 15.2.3]{coverthomas06} with the substitutions
\begin{equation}
S_1 = \boldsymbol{X}_{k-1-j}(w'|\dotsm), S_2 = \boldsymbol{Y}_k, S_3 = (\boldsymbol{X}_{k-j},\dotsc, \boldsymbol{X}_{D-1}),
\end{equation}
for some $w' \neq w(m-k+2)$, and with a sufficiently large $n$, we can again bound the probability that the wrong message $w'$ is in the set $\mathcal{W}_{k,j}(m)$ to be arbitrarily small, i.e.,
\begin{multline}
\Pr \{ w' \in \mathcal{W}_{k,j}(m) |A_\text{correct}(m-1) \} \\ < 2^{-n(I(X_{k-1-j};Y_k | X_{k-j},\dotsc, X_{D-1},S)-6\epsilon)},\label{eq:follows-from}
\end{multline}
for all  $j \in \{0, 1, \dotsc, k-2 \}$.

From \eqref{eq:by-aep}, we can bound the probability that the correct message $w(m-k+2)$ is not in node $k$'s set of message estimates to be arbitrarily small, i.e.,
\begin{subequations}
\begin{align}
& \Pr \{  w(m-k+2) \notin \mathcal{W}_k(m) |A_\text{correct}(m-1) \} \nonumber \\
&\leq \sum_{j=0}^{k-2} \Pr \{ w(m-k+2) \notin \mathcal{W}_{k,j}(m) |A_\text{correct}(m-1) \} \nonumber \\
& < (k-1)\epsilon \label{eq:small-e-2}\\
& \leq (D-1)\epsilon.\label{eq:small-e-3}
\end{align}
\end{subequations}
Eqn.~\eqref{eq:small-e-2} is true for any $\epsilon > 0$ if a sufficiently large $n$ is chosen.

From \eqref{eq:follows-from}, we can also bound the probability that any wrong message $w'$ is in node $k$'s set of message estimates to be arbitrarily small, i.e.,
\begin{subequations}
\begin{align}
&\Pr \Big\{ \text{Any wrong message } w' \neq w(m-k+2) \in \mathcal{W}_k(m) \Big| \nonumber \\ &\quad\quad A_\text{correct}(m-1) \Big\}\nonumber \\
&\leq \sum_{\substack{w' \in \{1,2,\dotsc,2^{nR}\}\\ \text{s.t. } w' \neq w(m-k+2)}} \Pr \Big\{ w' \in \mathcal{W}_k(m) \Big| A_\text{correct}(m-1) \Big\}\\
&= \sum_{\substack{w' \in \{1,2,\dotsc,2^{nR}\}\\ \text{s.t. } w' \neq w(m-k+2)}} \Pr \Big\{ w' \in \mathcal{W}_{k,j}(m), \forall j \in [0,k-2] \Big| \nonumber \\ &\quad\quad\quad\quad\quad\quad\quad\quad\quad\quad A_\text{correct}(m-1) \Big\}\\
&\leq \sum_{\substack{w' \in \{1,2,\dotsc,2^{nR}\}\\ \text{s.t. } w' \neq w(m-k+2)}} \prod_{j=0}^{k-2} \Pr \Big\{ w' \in \mathcal{W}_{k,j}(m) \Big| A_\text{correct}(m-1) \Big\}\\
&< (2^{nR}-1) 2^{-n\sum_{j=0}^{k-2}(I(X_{k-1-j};Y_k | X_{k-j},\dotsc, X_{D-1},S)-6\epsilon)}\label{eq:follow}\\
& = (2^{nR}-1) 2^{-n(I(X_1,  \dotsc, X_{k-1};Y_k | X_k, X_{k+1},\dotsc, X_{D-1},S)-6(k-1)\epsilon)}\\
& < 2^{-n[I(X_1, X_2, \dotsc, X_{k-1};Y_k | X_k, X_{k+1}\dotsc, X_{D-1},S)-6(k-1)\epsilon - R ]}\\
&< \epsilon_1,\label{eq:small-e}
\end{align}
\end{subequations}
where \eqref{eq:follow} follows from \eqref{eq:follows-from}.

\addtocounter{equation}{1}

Eqn.~\eqref{eq:small-e} is true for any $\epsilon_1 > 0$ if a sufficiently large $n$ is chosen and if the following condition is satisfied:
\begin{align}
R &< I(X_1, X_2, \dotsc, X_{k-1};Y_k | X_k, X_{k+1}\dotsc, X_{D-1},\boldsymbol{S}) \nonumber \\ & \quad - 6(k-1)\epsilon, \label{eq:rate-condition}
\end{align}
where $6(k-1)\epsilon \rightarrow 0$ as $n$ increases.

So, if $n$ is sufficiently large and if \eqref{eq:rate-condition} is satisfied, combining \eqref{eq:small-e-3} and \eqref{eq:small-e}, the probability of node $k$ making a decoding error in block $m$, conditioned on the event that there was no decoding error in the previous $(m-1)$ blocks, can be made as small as desired, i.e.,
\begin{multline}
\Pr \Big\{ \text{Node $k$ wrongly decodes $w(m-k+2)$ in block $m$} \big| \\ A_\text{correct}(m-1) \Big\} < (D-1)\epsilon + \epsilon_1.\label{eq:replace-by-D-ref}
\end{multline}

Now, consider all the nodes and all the blocks. Using \eqref{eq:replace-by-D-ref} and the same arguments as in \cite{xiekumar03}, the probability of any node making a decoding error in any block can be made as small as desired, for any chosen $M$ and $D$, if $n$ is sufficiently large and if \eqref{eq:rate-condition} is satisfied.

The mutual information term in \eqref{eq:rate-condition} can be written as \eqref{eq:a} shown at the top of the page.
For nodes that transmits, i.e.,  $j \notin \mathcal{L}(\boldsymbol{s}) \cup \{D\}$, we have $Y_j = \tilde{y}_j$ and $I(X_1, X_2, \dotsc, X_{j-1};Y_j=\tilde{y}_j | X_j, X_{j+1}, \dotsc, X_{D-1},\boldsymbol{S}=\boldsymbol{s})=0$. This gives \eqref{eq:known-reception}. For nodes that listen, i.e., $j \in \mathcal{L}(\boldsymbol{s})$, we have $X_j=\tilde{x}_j$. This gives \eqref{eq:known-transmission}.

Since the result holds for any state and input distributions in the form $p(x_1,x_2,\dotsc,x_{D-1},\boldsymbol{s})=p(\boldsymbol{s})\prod_{k\in\mathcal{L}(\boldsymbol{s})} \delta(x_k=\tilde{x}_k) p(x_1,x_{\mathcal{T}(\boldsymbol{s})}|x_{\mathcal{L}(\boldsymbol{s})},\boldsymbol{s})$, we get Theorem~\ref{thm:df-hd-k-rc}. \hfill $\blacksquare$

\section{Proof for Algorithm~\ref{algorithm2}}\label{appendix:algo2}

In this section, we will prove that when applied to the phase-fading HD-MRC, Algorithm~\ref{algorithm1} can be simplified to Algorithm~\ref{algorithm2}. The proof consists of three parts, which are the differences between the two algorithms. Without loss of generality we assume that $b_1 < b_2 < \dotsm < b_B$.

\subsection{To prove that we only need to consider one set, i.e., $\{1,2,\dotsc,B\}$, for each $B$}

First, we consider Step~\ref{hd-step-next-B} in Algorithm~\ref{algorithm1} for $|\mathcal{B}|=1$. All possible cases for $\mathcal{B}$ are $\{1\}$, $\{2\}$, $\dotsc$, $\{K\}$. Suppose that an optimal solution exists for $\mathcal{B} = \{j\}$ where $j \neq 1$ with an optimal $\boldsymbol{p}^*$ where
\begin{align}
&\boldsymbol{p}^* \in \arglmax_{\boldsymbol{p}} R_j(\boldsymbol{p}), \\
&R_j(\boldsymbol{p}^*) < R_k(\boldsymbol{p}^*),\quad  \forall k \neq j. \label{eq:algo-check-condition}
\end{align}
Recall that $R_j(\cdot)$ is the reception rate of node $(j+1)$.
The optimal $\boldsymbol{p}^*$ must have zero probability for all the states in which node $(j+1)$ transmits. Otherwise, we can decrease the probability of a state $\boldsymbol{s}'$ in which node $(j+1)$ transmits by some small amount $\epsilon$ and increase the probability of the state $\boldsymbol{s}''$ by $\epsilon$ (where $\boldsymbol{s}'$ and $\boldsymbol{s}''$ only differ in node $(j+1)$'s transmit/listen mode). Let the new schedule be $\boldsymbol{p}''$. The new schedule will have a higher reception rate for node $(j+1)$, $R_j(\cdot)$, by some small amount while maintaining $R_j(\boldsymbol{p}'') < R_k(\boldsymbol{p}'')$, for all $k \neq j$, since all \{$R_i(\boldsymbol{p})\}$ are continuous functions of $\boldsymbol{p}$. So, in the optimal schedule, node $(j+1)$ always listens.

In addition, suppose that $\boldsymbol{p}^*$ has a non-zero probability, $p(\boldsymbol{s}') = p_a>0$, for state $\boldsymbol{s}'=(L,\dotsc,L,\dotsc)$ in which both nodes 2 and $(j+1)$ listen, and define $p(\boldsymbol{s}'') = p_b$, where $\boldsymbol{s}''$ equals $\boldsymbol{s}'$ except for the first position, i.e., in state $\boldsymbol{s}''=(T,\dotsc,L,\dotsc)$, node 2 transmits and node $(j+1)$ listens. We now show that $\boldsymbol{p}^*$ cannot be an optimal schedule. Let $\boldsymbol{p}''$ be the same schedule as $\boldsymbol{p}^*$ except that it has $p(\boldsymbol{s}')=p_a - \epsilon$ and $p(\boldsymbol{s}'') = p_b + \epsilon$, for some small positive $\epsilon$. Since $j > 1$, we have $R_j(\boldsymbol{p}'') > R_j(\boldsymbol{p}^*)$. Also, since $\epsilon$ is a small positive value, the inequality in \eqref{eq:algo-check-condition} is still maintained, i.e., $R_j(\boldsymbol{p}'') < R_k(\boldsymbol{p}'')$, for all $k \neq j$. This means $\boldsymbol{p}^*$ cannot be an optimal schedule. So, the optimal schedule $\boldsymbol{p}^*$ for $|\boldsymbol{B}|=1$ where $\mathcal{B} \neq \{1\}$ must have zero probability for all states in which node 2 listens, i.e., node 2 always transmits in $\boldsymbol{p}^*$. This implies that if $j \neq 1$,  $R_1(\boldsymbol{p}^*)=0$, and condition~\eqref{eq:algo-check-condition} cannot be satisfied for $k=1$ (contradiction). So, for $|\mathcal{B}|=1$, we only need to check for $\mathcal{B}=\{1\}$.

Next, consider $2 \leq B \leq  K-1$. If an optimal schedule exists, it must satisfies the following conditions:
\begin{align}
&\boldsymbol{p}^* \in \arglmax_{\boldsymbol{p}} \{R_{b_1}(\boldsymbol{p}) = R_{b_2}(\boldsymbol{p}) = \dotsm = R_{b_B} (\boldsymbol{p}) \},\label{eq:condition-1-B}\\
&R_{b_1}(\boldsymbol{p}^*) < R_k(\boldsymbol{p}^*), \forall k \notin \mathcal{B}.\label{eq:condition-2-B}
\end{align}

First, we show that we need to consider only schedules in which node $(b_B+1)$ always listens. Otherwise, we can increase $R_{b_B} (\cdot)$ by reducing the probability of a state $\boldsymbol{s}'$, in which node $(b_B+1)$ transmits, by a small amount and increase the probability of the state $\boldsymbol{s}''$ (which only differs from $\boldsymbol{s}'$ in which node $(b_B+1)$ listens) by the same amount. Doing this will not affect the reception rate of all upstream nodes, i.e., $R_{b_1}(\cdot), R_{b_2}(\cdot), \dotsc, R_{b_{B-1}}(\cdot)$ (since we assume that $b_1 < b_2 < \dotsm < b_B$). The reception rates of other nodes will only change by a small amount while maintaining \eqref{eq:condition-2-B}. So, the new schedule will also achieve the same overall rate but with $|\mathcal{B}|=B-1$, and we would have found this optimal schedule earlier in Step \ref{hd-step-next-B} in Algorithm~\ref{algorithm1}. So, we can ignore the schedule in which node $(b_B+1)$ transmits with non-zero probability.

Having established that node $(b_B+1)$ always listens, we now show that we only need to consider the case where $\mathcal{B}=\{1,2,\dotsc, B\}$. Suppose that $\mathcal{B} \neq \{1,2,\dotsc,B\}$. This means there exists some index $i \leq B$ which is not in the set $\mathcal{B}$ and the last index in the set $\mathcal{B}$ must be strictly greater than $B$, i.e., $b_B > B$. Since $i \notin \mathcal{B}$, from \eqref {eq:condition-2-B}, the reception rate of node $(i+1)$ must be strictly greater than $R^*$, and this also means node $(i+1)$ listens with non-zero probability. Using the same argument as before, we can decrease $R_i(\cdot)$ by a small amount by decreasing the probability of the state $\boldsymbol{s}'$ in which node $(i+1)$ listens by a small amount $\epsilon$ and increasing the probability of state $\boldsymbol{s}''$ (which only differs from  $\boldsymbol{s}'$ in which node $(i+1)$ transmits) by $\epsilon$. Doing this will not affect the reception rate of all upstream nodes, i.e., $\{R_j(\cdot): 1 \leq j \leq j-1\}$. However, the reception rates of all downstream nodes can either increase (if they listen in $\boldsymbol{s}''$) or stay the same (if they transmit in $\boldsymbol{s}''$). In particular, $R_{b_B}(\cdot)$ will increase as node $(b_B+1)$ always listens. This means using the new schedule, we can attain the optimal DF rate with fewer bottleneck nodes. We argued that we would have found this solution earlier in Step~\ref{hd-step-next-B} in Algorithm~\ref{algorithm1}. So, we can ignore this case. Up to this point, we have proven that for $|\mathcal{B}|=B$ in Algorithm~\ref{algorithm1}, we only need to consider $\mathcal{B}=\{1,2,\dotsc,B\}$. 

\subsection{To prove that we only need to consider the global maximum in  \eqref{eq:algo-cond-1}}

Note that in the scheduling problem for the phase-fading HD-MRC, $R_i(\boldsymbol{p})$ is a linear function of $\boldsymbol{p}$ (see \eqref{eq:R-rate}). So, the intersection of linear functions
\begin{equation}
R(\boldsymbol{p}) = R_{b_1}(\boldsymbol{p}) = R_{b_2}(\boldsymbol{p}) = \dotsm = R_{b_B}(\boldsymbol{p})
\end{equation}
is also a linear in $\boldsymbol{p}$. Furthermore, the set of $\{\boldsymbol{p}\}$ that satisfied the aforementioned equation is a convex set. This means we can replace the local maximum function in \eqref{eq:algo-cond-1} by the global maximum function in \eqref{eq:algo2-condition1}.

\subsection{$R(\mathcal{B},\boldsymbol{p}')$ in \eqref{eq:algo-cond-2}, if exists, will not increase with a larger $|\mathcal{B}|$}

Next we will show that once a solution is found for \eqref{eq:algo2-condition1} and \eqref{eq:algo2-condition2}  in Algorithm~\ref{algorithm2}, we can terminate the algorithm.

We will prove this by contradiction. Suppose that there exists a solution for \eqref{eq:algo2-condition1} and \eqref{eq:algo2-condition2} for some $B=B'$, and suppose that we can get a better solution for some $B=B'' > B'$. This is equivalent to saying that there exist some $\boldsymbol{p}',\boldsymbol{p}'' \in \mathcal{G}$ and $B' < B'' \leq K$ such that
\begin{align}
&\boldsymbol{p}' \in \argmax_{\boldsymbol{p}} \{ R_1(\boldsymbol{p}), R_2(\boldsymbol{p}), \dotsc, R_{B'}(\boldsymbol{p}) \}, \label{eq:contradiction-global-max}\\
&R' = R_1(\boldsymbol{p}') < R_j(\boldsymbol{p}'), \quad \forall j > B',\\
&\boldsymbol{p}'' \in \argmax_{\boldsymbol{p}} \{ R_1(\boldsymbol{p}), R_2(\boldsymbol{p}), \dotsc, R_{B'}(\boldsymbol{p}), \dotsc, R_{B''}(\boldsymbol{p}) \},\\
&R'' = R_1(\boldsymbol{p}'') < R_j(\boldsymbol{p}''), \quad \forall j > B'',
\end{align}
where
\begin{equation}
R'< R''.
\end{equation}
Now, since
\begin{subequations}
\begin{align}
R'' &= R_1(\boldsymbol{p}'') = R_2(\boldsymbol{p}'') = \dotsm = R_{B'}(\boldsymbol{p}'') \\ &> R' = R_1(\boldsymbol{p}') = R_2(\boldsymbol{p}') = \dotsm = R_{B'}(\boldsymbol{p}'),
\end{align}
\end{subequations}
eqn.~\eqref{eq:contradiction-global-max} cannot be true as $\boldsymbol{p}'$ cannot be a global maximum for the set $\{1,2,\dotsc,B'\}$ (contradiction).

This completes the proof for Algorithm~\ref{algorithm2}. \hfill $\blacksquare$

\section{The four-node Phase-Fading HD-MRC}\label{appendix-example}

We let
\begin{equation}
p_0=p(L,L), p_1=p(L,T), p_2=p(T,L), p_3=p(T,T).
\end{equation}
From Theorem~\ref{thm:df-gaussian-hd-krc}, the optimal scheduling problem can be written as
\begin{equation}\label{eq:hd-minimax-testing-four-mrc}
\max_{\boldsymbol{p} \in \mathcal{G}} \min \{ r_2(\boldsymbol{p}), r_3(\boldsymbol{p}), r_4(\boldsymbol{p})\},
\end{equation}
where $\boldsymbol{p} = (p_0, p_1, p_2, p_3)$ is a schedule,
$\mathcal{G} = \{ (p_0,p_1,p_2,p_3) \in \mathcal{R}^4: \sum_i p_i \leq 1, p_i \geq 0, \forall i=0,1,2,3\}$ is the set of all feasible schedules, and
\begin{align}
r_2 &= p_0 \Gamma(\lambda_{1,2}P_1/N_2) + p_1 \Gamma(\lambda_{1,2}P_1/N_2)\label{eq:r-2}\\
r_3 &= p_0 \Gamma(\lambda_{1,3}P_1/N_3) + p_2 \Gamma((\lambda_{1,3}P_3+\lambda_{2,3}P_2)/N_3)\label{eq:r-3}\\
r_4 &= p_0 \Gamma(\lambda_{1,4}P_1/N_4) + p_1 \Gamma((\lambda_{1,4}P_1+\lambda_{3,4}P_3)/N_4)\nonumber \\ &\quad  + p_2 \Gamma((\lambda_{1,4}P_1 + \lambda_{2,4}P_2)/N_4)\nonumber\\
&\quad + p_3 \Gamma((\lambda_{1,4}P_1 + \lambda_{2,4}P_2 + \lambda_{3,4}P_3)/N_4).\label{eq:r-4}
\end{align}

Using Algorithm~\ref{algorithm2}, we start with $\mathcal{B} = \{1\}$. We need to find some $\boldsymbol{p}^*$ such that
\begin{align}
&\boldsymbol{p}^* \in \max_{\boldsymbol{p} \in \mathcal{G}} r_2(\boldsymbol{p}),\\
&r_2(\boldsymbol{p}^*) < r_3(\boldsymbol{p}^*), \\
&r_2(\boldsymbol{p}^*) < r_4(\boldsymbol{p}^*).
\end{align}
The first condition means that $p_0^* + p_1^* = 1$, and $p_2^*=p_3^*=0$. The second and the third conditions simplify to
\begin{align}
&\frac{\Gamma(\lambda_{1,2}P_1/N_2)}{\Gamma(\lambda_{1,3}P_1/N_3)} < p_0^* \leq 1, \label{eq:hd-case-3-4mrc-a} \\
&0 \leq p_0^* < \frac{\Gamma((\lambda_{1,4}P_1+\lambda_{3,4}P_3)/N_4)-\Gamma(\lambda_{1,2}P_1/N_2)}{\Gamma((\lambda_{1,4}P_1+\lambda_{3,4}P_3)/N_4)-\Gamma(\lambda_{1,4}P_1/N_4)}, \label{eq:hd-case-3-4mrc-b}
\end{align}
respectively.
If we can find some $p_0^* \in [0,1]$ satisfying \eqref{eq:hd-case-3-4mrc-a} and \eqref{eq:hd-case-3-4mrc-b}, then the optimal schedule is given by $\boldsymbol{p}^* = ( p_0^*, 1-p_0^*, 0, 0)$.

If conditions \eqref{eq:hd-case-3-4mrc-a} and \eqref{eq:hd-case-3-4mrc-b} cannot be satisfied, we proceed to $\mathcal{B} = \{1,2\}$. We need to find some $\boldsymbol{p}^*$ such that
\begin{align}
\boldsymbol{p}^* &\in \max_{\boldsymbol{p} \in \mathcal{G}} \{r_2(\boldsymbol{p})=r_3(\boldsymbol{p})\},\\
r_2(\boldsymbol{p}^*) &=r_3(\boldsymbol{p}^*) < r_4(\boldsymbol{p}^*).\label{eq:case-r2=r3}
\end{align}
Maximizing $r_2(\boldsymbol{p})$ and $r_3(\boldsymbol{p})$ gives $p_0^*+p_2^* = 1$ and $p_1^*=p_3^*=0$. Furthermore,  $r_2(\boldsymbol{p}^*) = r_3(\boldsymbol{p}^*)$ gives $p_0^*\Gamma\left(\frac{\lambda_{1,2}P_1}{N_2}\right) = p_0^* \Gamma\left(\frac{\lambda_{1,3}P_1}{N_3}\right) + p_2^* \Gamma\left(\frac{\lambda_{1,3}P_3+\lambda_{2,3}P_2}{N_3}\right)$. So, we have
\begin{align}
p_0^* &=
\frac{\Gamma\left(\frac{\lambda_{1,3}P_3+\lambda_{2,3}P_2}{N_3}\right)}{\Gamma\left(\frac{\lambda_{1,3}P_3+\lambda_{2,3}P_2}{N_3}\right)+\Gamma\left(\frac{\lambda_{1,2}P_1}{N_2}\right)-\Gamma\left(\frac{\lambda_{1,3}P_1}{N_3}\right)}\label{eq:hd-case-6-1}\\
p_2^* &=
\frac{\Gamma\left(\frac{\lambda_{1,2}P_1}{N_2}\right)-\Gamma\left(\frac{\lambda_{1,3}P_1}{N_3}\right)}{\Gamma\left(\frac{\lambda_{1,3}P_3+\lambda_{2,3}P_2}{N_3}\right)+\Gamma\left(\frac{\lambda_{1,2}P_1}{N_2}\right)-\Gamma\left(\frac{\lambda_{1,3}P_1}{N_3}\right)}. \label{eq:hd-case-6-2}
\end{align}
The second condition \eqref{eq:case-r2=r3} is thus equivalent to
\begin{multline}
\Gamma\left(\frac{\lambda_{1,3}P_3+\lambda_{2,3}P_2}{N_3}\right)\left[ \Gamma\left(\frac{\lambda_{1,2}P_1}{N_2}\right)-\Gamma\left(\frac{\lambda_{1,4}P_1}{N_4}\right) \right]\\ < \Gamma\left(\frac{\lambda_{1,4}P_4+\lambda_{2,4}P_2}{N_4}\right) \left[\Gamma\left(\frac{\lambda_{1,2}P_1}{N_2}\right)-\Gamma\left(\frac{\lambda_{1,3}P_1}{N_3}\right)\right].\label{eq:hd-case-6-3}
\end{multline}

If condition \eqref{eq:hd-case-6-3} cannot be satisfied, we proceed to $\mathcal{B}=\{1,2,3\}$. We need to find some $\boldsymbol{p}^*$ such that
\begin{equation}
\boldsymbol{p}^* \in \max_{\boldsymbol{p} \in \mathcal{G}} \{r_2(\boldsymbol{p}) = r_3(\boldsymbol{p}) = r_4(\boldsymbol{p})\},
\end{equation}
We have the following linearly independent equations:
\begin{align}
r_2(\boldsymbol{p}^*) &= r_3(\boldsymbol{p}^*)\\
r_2(\boldsymbol{p}^*) &= r_4(\boldsymbol{p}^*)\\
p_0^* + p_1^* + p_2^* + p_3^* &= 1.
\end{align}
Solving the above, we can express $p_1$, $p_2$, $p_3$ in terms of only $p_0$. Hence, the optimization becomes
\begin{equation}\label{eq:hd-4mrc-case-7}
\max r_2(p_0),
\end{equation}
subject to $0 \leq p_i \leq 1$, for $i=0,1,2,3$.

Summarizing all the above, we have the following for the four-node phase-fading HD-MRC:\\
$\bullet$ If
$$\frac{\Gamma(\lambda_{1,2}P_1/N_2)}{\Gamma(\lambda_{1,3}P_1/N_3)}< \frac{\Gamma((\lambda_{1,4}P_1+\lambda_{3,4}P_3)/N_4)-\Gamma(\lambda_{1,2}P_1/N_2)}{\Gamma((\lambda_{1,4}P_1+\lambda_{3,4}P_3)/N_4)-\Gamma(\lambda_{1,4}P_1/N_4)},$$ $$ 0 \leq \frac{\Gamma((\lambda_{1,4}P_1+\lambda_{3,4}P_3)/N_4)-\Gamma(\lambda_{1,2}P_1/N_2)}{\Gamma((\lambda_{1,4}P_1+\lambda_{3,4}P_3)/N_4)-\Gamma(\lambda_{1,4}P_1/N_4)},$$ and $$ \frac{\Gamma(\lambda_{1,2}P_1/N_2)}{\Gamma(\lambda_{1,3}P_1/N_3)} \leq 1,$$
then an optimal schedule is given by some $p_0^*$ satisfying \eqref{eq:hd-case-3-4mrc-a} and \eqref{eq:hd-case-3-4mrc-b}, $p_1^*=1-p_0^*$, $p_2^*=0$, and $p_3^*=0$.\\
$\bullet$ Else, if
$$
0 < \frac{\Gamma\left(\frac{\lambda_{1,3}P_3+\lambda_{2,3}P_2}{N_3}\right)}{\Gamma\left(\frac{\lambda_{1,3}P_3+\lambda_{2,3}P_2}{N_3}\right)+\Gamma\left(\frac{\lambda_{1,2}P_1}{N_2}\right)-\Gamma\left(\frac{\lambda_{1,3}P_1}{N_3}\right)} < 1
$$
and \eqref{eq:hd-case-6-3} are both satisfied, an optimal schedule is given by \eqref{eq:hd-case-6-1}, \eqref{eq:hd-case-6-2}, $p_1^*=0$, and $p_3^*=0$.\\
$\bullet$ Else, we solve for \eqref{eq:hd-4mrc-case-7}. Note that this is simpler than the original optimization problem \eqref{eq:hd-minimax-testing-four-mrc}.



\begin{IEEEbiographynophoto}{Lawrence Ong}
(S'05--M'10) received the BEng (1st Hons) degree in electrical engineering from the National University of Singapore (NUS), Singapore, in 2001. He subsequently received the MPhil degree from the University of Cambridge, UK, in 2004 and the PhD degree from NUS in 2008.

He was with MobileOne, Singapore, as a system engineer from 2001 to 2002. He was a research fellow at NUS, from 2007 to 2008. From 2008 to 2012, he was a postdoctoral researcher at The University of Newcastle, Australia.

In 2012, he was awarded the Discovery Early Career Researcher Award (DECRA) by the Australian Research Council (ARC). He is currently a DECRA fellow at The University of Newcastle.
\end{IEEEbiographynophoto}

\begin{IEEEbiographynophoto}{Mehul Motani}
(S'92--M'00) received the B.S. degree from Cooper Union, New York, NY, the M.S. degree from Syracuse University, Syracuse, NY, and the Ph.D. degree from Cornell University, Ithaca, NY, all in Electrical and Computer Engineering. 

Dr. Motani is currently an Associate Professor in the Electrical and Computer Engineering Department at the National University of Singapore (NUS). He has held a Visiting Fellow appointment at Princeton University, Princeton, NJ. Previously, he was a Research Scientist at the Institute for Infocomm Research in Singapore for three years and a Systems Engineer at Lockheed Martin in Syracuse, NY for over four years. His research interests are in the area of wireless networks. Recently he has been working on research problems which sit at the boundary of information theory, networking, and communications, with applications to mobile computing, underwater communications, sustainable development and societal networks. 

Dr. Motani has received the Intel Foundation Fellowship for his Ph.D. research, the NUS Faculty of Engineering Innovative Teaching Award, and placement on the NUS Faculty of Engineering Teaching Honours List. He has served on the organizing committees of ISIT, WiNC and ICCS, and the technical program committees of MobiCom, Infocom, ICNP, SECON, and several other conferences. He participates actively in IEEE and ACM and has served as the secretary of the IEEE Information Theory Society Board of Governors. He is currently an Associate Editor for the IEEE Transactions on Information Theory and an Editor for the IEEE Transactions on Communications. 
\end{IEEEbiographynophoto}

\begin{IEEEbiographynophoto}{Sarah J. Johnson}
(M'04) received the B.E. (Hons) degree in electrical engineering in 2000, and PhD in 2004, both from the University of Newcastle, Australia.

She has held research positions with the Wireless Signal Processing
Program, National ICT Australia and the University of Newcastle. In
2011, Dr. Johnson was awarded a Future Fellowship by the Australian
Research Council (ARC). She is currently a future fellow at The
University of Newcastle.

\end{IEEEbiographynophoto}

\end{document}